\documentclass[letterpaper, 10 pt, conference]{ieeeconf}

\IEEEoverridecommandlockouts                              

\usepackage{amssymb,amsmath,amsfonts,mathtools}

\usepackage{algorithmic}
\usepackage{subfig}
\usepackage[colorlinks=true]{hyperref}
\usepackage{graphicx}
\usepackage{textcomp}
\usepackage{mathtools}
\usepackage[colorlinks=true]{hyperref}
\usepackage{algorithm}

\usepackage{calrsfs}

\newtheorem{theorem}{Theorem}[section]
\newtheorem{lemma}[theorem]{Lemma}
\newtheorem{definition}[theorem]{Definition}
\newtheorem{corollary}[theorem]{Corollary}

\newtheorem{problem}[theorem]{Problem}

\newtheorem{remark}[theorem]{Remark}

\newtheorem{assumptions}[theorem]{Assumptions}

\DeclareMathAlphabet{\pazocal}{OMS}{zplm}{m}{n}

\renewcommand{\theequation}{\thesection.\arabic{equation}}
\renewcommand{\thefigure}{\thesection.\arabic{figure}}
\renewcommand{\thetable}{\thesection.\arabic{table}}

\newcommand{\sr}{\stackrel}

\newcommand{\rar}{\rightarrow}

\newcommand{\tri}{\sr{\triangle}{=}}

\newcommand{\bea}{\begin{eqnarray}}
\newcommand{\eea}{\end{eqnarray}}
\newcommand{\bes}{\begin{eqnarray*}}
\newcommand{\ees}{\end{eqnarray*}}
\newcommand{\bce}{\begin{center}}
\newcommand{\ece}{\end{center}}

\def\VR{\kern-\arraycolsep\strut\vrule &\kern-\arraycolsep}
\def\vr{\kern-\arraycolsep & \kern-\arraycolsep}

\newcommand{\ben}{\begin{enumerate}}
\newcommand{\een}{\end{enumerate}}

\newcommand{\bi}{\begin{itemize}}
\newcommand{\ei}{\end{itemize}}

\newcommand{\bp}{\begin{problem}}
\newcommand{\ep}{\end{problem}}
\newcommand{\hso}{\hspace{.1in}}
\newcommand{\hst}{\hspace{.2in}}

\newcommand{\bc}{\begin{center}}
\newcommand{\ec}{\end{center}}

\title{\LARGE \bf  Private and Common Information States in Decentralized   Parallel  Dynamic Programming  for  Delayed Sharing Patterns
}

\author{Charalambos D. Charalambous$^1$, Umarbek Guvercin$^2$,  Seddik   Djouadi$^2$ 
\thanks{$^{1}$Charalambos D. Charalambous is  with the Faculty of Electrical and Computer Engineering, University of Cyprus, Nicosia 1678, Cyprus
{\tt\small chadcha@ucy.ac.cy}}
\thanks{$^{2}$Seddik Djouadi and  Umarbek Guvercin are  with the Faculty of Electrical Engineering and Computer Science, University of Tennessee, Knoxville, TN, 37996, USA 
{\tt\small \{mdjouadi,ugvercin\}@utk.edu}}%
\thanks{}%
}

\begin{document}

\maketitle
\thispagestyle{empty}
\pagestyle{empty}

\begin{abstract}
This paper develops a dynamic programming (DP) approach  for  decentralized  stochastic optimal control problems with   delayed sharing information patterns,  which exhibits  the fundamental Properties of classical DP   of centralized  partially observable Markov decision problems (POMDPs):  the   value functions and  information states   depend on the actions of the   minimizing controls and not their  strategies. 

This is achieved by invoking  the  concept of Person-by-Person  (PbP)  optimality, in which 
  each control strategy is associated with  a value function conditioned on  its  assigned delayed sharing  information pattern, when all other strategies are fixed to their optimal responses. 
The  value functions satisfy generalized  and  simplified DP equations. These are   used to derive necessary and sufficient conditions for PbP optimality. 

The simplified DP equations are obtained by invoking  the  structural property that optimal strategies are separated and functionals of  two information  states:   
 1) a  private    \^a posteriori probability distribution  based  on the  information pattern of the strategy,   and  2)  a  centralized \^a posteriori probability  distribution based   on the shared or common information to  all strategies, each  satisfying a   Markov recursion. 
 
The DP approach of this paper, settles a long standing open problem since the appearance of  {  $T-$step delayed  sharing patterns} in \cite[Section~IV.G]{witsenhausen1971}, in terms of generalizing the fundamental properties of classical DP approach.

 
\end{abstract}

\section{INTRODUCTION}
\label{sec:int}
Witsenhausen in the 1971 seminal   paper \cite{witsenhausen1971}, introduced 
    a    general decentralized Markovian  stochastic network with    multiple  observation posts collecting information and  multiple control stations  applying control actions.  The control actions are generated by  strategies,  which are  assigned different  {\it ``Information Patterns or Structures''}.  The objective of the multiple  strategies  is to   optimize   a single payoff,  common  to all  strategies.  
    Extensively discussed in \cite{witsenhausen1971}  
 is   the {\it  $T-$step delayed  sharing patterns}  \cite[Section~IV.G]{witsenhausen1971}, when   each   
strategy   has  access  to {\it a private information component},  and a {\it common or shared    information component} received by all other  strategies   with  delay $T$ units of time.  
        
        Over the last 50 years, several   decentralized optimization methods are proposed. These are classified  into   3 directions. 

{\it  1) The Single Dynamic Programming Approach  for $T-$Step Delayed  Sharing Patterns  \cite{kurtaran-sivan1973,sandell-athans1974,kurtaran1975,yoshikawa1975,varaiya-walrand1978,kurtaran1979,bansal-basar1987} and generalizations  \cite{nayyar-mahajan-teneketzis2011,nayyar-mahajan-teneketzis2013,nayyar-teneketzis2019}. } 

 {\it  2) The Static Reduction \cite{witsenhausen1988} and generalizations  \cite{charalambous2014equivalence,teslang-djouadi-charalambous:ACC2021}.}
 
 {\it  3)  The Pontryagin's Decentralized  Stochastic Maximum Principle (SMP)  \cite{charalambous-ahmed:IEEEAC2017a,charalambous-ahmed:IEEEAC2018,charalambous-ahmed:IEEEAC2017b,charalambous:MCSS2016}.} 

These  methods were   inspired  by concepts of   {\it static team theory} 
\cite{radner1962,marschak-radner1972}
called   {\it person-by-person\footnote{PbP optimality may be viewed as a special case of the concept  of Nash-equilibrium; it is applied  in static team optimization problems with multiple  agents  to deal with asymmetry of information of  the agents.} (PbP) and  team  optimality}.
The generalization  of static team theory to  
  decentralized  stochastic dynamical systems and  networks remains to this date an active research area.  
  
In this paper we focused on 1). We develop a new DP  approach that  satisfies   the   two fundamental  Properties \# 1, \#2  of  classical DP  of  centralized   partially observable Markov decision problems (POMDPs):
 
{\bf Property \#1.} The cost-to-go, value processes  or value  functions involve conditional probability distributions, which are   independent of the  optimizing strategies;   the optimization of  the right hand side of the  DP equations is over the action spaces of the controls and not over their   strategy spaces.

{\bf Property \#2.} The DP equations are expressed in terms of information states  (i.e., satisfy  the Markov property); the information states  depend on the  actions of the control and not   their strategies,  they  are   sufficient statistics for the    strategies, and a generalized separation principle holds.

Indeed, the widespread success of    classical  centralized DP  of  POMDP  \cite{kumar-varayia:B1986,caines1988,hernandezlerma-lasserre1996} is attributed to  Properties \#1, \#2.  
In particular,  Property \#1 is essential to  develop efficient DP algorithms, while Property \#2 is fundamental    for optimal strategies to be Markovian, rather than  expand with time.

        
\subsection{Literature Review of the Single Dynamic Programming Approach  for $T-$Step Delayed  Sharing Patterns}
Subsequent to  the appearance of     \cite{witsenhausen1971}, several   studies   \cite{kurtaran-sivan1973,sandell-athans1974,kurtaran1975,yoshikawa1975,varaiya-walrand1978,kurtaran1979,bansal-basar1987,nayyar-mahajan-teneketzis2011,nayyar-mahajan-teneketzis2013,nayyar-teneketzis2019}     focused on developing a dynamic programming (DP) approach   for   decentralized Markovian stochastic networks with {\it  $T-$step delayed  sharing patterns}, by invoking Witsenhausen's separation  assertion    \cite[Assertion 8, pp.1562]{witsenhausen1971} (or variations of it).  
 This gave rise to the
    DP approach  \cite{sandell-athans1974,yoshikawa1975,varaiya-walrand1978,nayyar-mahajan-teneketzis2011,nayyar-mahajan-teneketzis2013,nayyar-teneketzis2019},  based on  {\it a single  cost-to-go} for all strategies conditioned on the  common or sharing  information component available to all strategies with delay  $T-$ units of time. To ensure the DP approach is tractable,   use is made of Assertion   8 in  \cite[Assertion 8, pp.1562]{witsenhausen1971} (or variations)   to compress the common  information component available to all strategies, using an  \^a posteriori probability distribution  conditioned on the common  information component.  
    
 \indent As we discuss below,  the single DP  approach  \cite{sandell-athans1974,yoshikawa1975,varaiya-walrand1978,nayyar-mahajan-teneketzis2011,nayyar-mahajan-teneketzis2013,nayyar-teneketzis2019}  suffers from various  technical  limitation.
    
The single DP approach uses a   {\it  single cost-to-go},  leading  to a single  DP equation, which   does not  satisfy  fundamental  Properties \# 1, \#2  of  classical DP  of  centralized   POMDPs. 
%
%
 This  is   apparent from \cite{nayyar-mahajan-teneketzis2011,nayyar-mahajan-teneketzis2013,nayyar-teneketzis2019}
 and  acknowledged in   \cite[pp.1610-1611]{nayyar-mahajan-teneketzis2011}. 

  The DP approach based on   {\it the single cost-to-go}  requires the existence of a common or shared information component received by all strategies (see 
 Witsenhausen  \cite[below Assertion 8, pp.1562]{witsenhausen1971} and   \cite[Abstract]{nayyar-teneketzis2019}):  
if  there is no  shared information component among the controls 
 the  conditioning  information of the  {\it single  cost-to-go}  is the {\it null set},  hence the
  DP approach ``fails".  
   
 \indent  The single  DP  approach 
  does not fully  characterize the optimality of the strategies. This limitation  is obvious  from Sandell's and  Athan's  \cite{sandell-athans1974} derivation  
of the solution of  LQG problems with $1-$step delayed sharing patterns (and    \cite{kurtaran-sivan1973,kurtaran1975,yoshikawa1975}). In fact,     
the single DP equation is used as   intermediate or   preliminary  step, towards obtaining the optimal strategies.   The complete derivation   involves  another decentralized   optimization problem, which is solved  using the concept of  PbP  optimality of static team theory \cite{radner1962}, and the  inherent simplicity of  $1-$step delayed sharing. Similar steps are  involved in 
 the counterexample
   \cite[Section~II]{varaiya-walrand1978}.

 \subsection{New Decentralized DP Approach and Separation Principle}
 \label{sub-DP} 
 {\bf The Approach.}
 We develop a new DP approach for  decentralized Markovian stochastic networks with  $T-$step delayed  sharing patterns, by invoking  the concept  of PbP optimality of static team theory \cite{radner1962,marschak-radner1972}. 
%
 PbP optimality (see Definition~\ref{def-pbp})   is   analogous to the notion of classical Nash-equilibrium,  with the important difference that there is  a   single payoff common to all strategies,  and the information structures of the strategies are different.   
 
Based  on the definition of  PbP optimality, 
%
we  develop  a new generalized  DP and separation principle approach that satisfies the   two fundamental  Properties \#1, \#2,  of classical  centralized POMDPs. 
%
%
%
%


{\bf  The Main Contributions.} 
The main results  are   necessary and sufficient conditions for PbP optimality, using generalized   DP equations and information states, which  are consistent with Properties \#1, \#2. 

The  information states  are sufficient statistics for the strategies, leading to different generalized and simplified  DP equations,  expressed in terms of the so-called, {\it semi-separated strategies} and  {\it separated strategies}. These are natural generalizations of classical centralized DP,  information states and separated strategies  of POMDP \cite{bertsekas-shreve1978,kumar-varayia:B1986}.

An important feature of  the  optimal {\it separated strategies}  is that,  each strategy requires  2  information states or beliefs:

1) {\it The private information state} of each control, based on its assigned  $T-$step delayed shared information pattern.

2) {\it The  centralized information state},  based on the common or shared information component received by all controls.

 The rest of the paper is structured as follows.

Section~\ref{sect:model} introduces  the  decentralized stochastic  network.

Section~\ref{discrete}   derives the main results of  this paper.

\section{The  Decentralized Stochastic  Network}
\label{sect:model}
We introduce 
 the decentralized discrete-time  stochastic network.


{\bf Notation.} ${\mathbb R} \tri (-\infty,\infty)$, ${\mathbb  Z} \tri \{\ldots, -1,0,1, \ldots\}$,    ${\mathbb Z}_+ \tri \{1,2, \ldots\}$, ${\mathbb Z}_+^n \tri \{1,2, \ldots,n\}$,  $n \in {\mathbb Z}_+$. 
  Given  $s^{(K)} \tri \{s^1, \ldots, s^K\}, K \in \mathbb{Z}_+$, we  define   $s^{-k} \tri s^{(K)} \setminus\{s^k\}=\{s^1, \ldots, s^{k-1},s^{k+1}, \ldots, s^K\}$, i.e.,  $s^{(K)}$ minus element $\{s^k\}$.  \\ 
 $\{({\mathbb X}_t,{\cal B}({\mathbb X }_t))\big| t\in\mathbb{Z}_+^n\}$ denotes  measurable spaces, where ${\mathbb X}_t$ is  confined to complete separable metric space or Polish space, and ${\cal B}({\mathbb X}_t)$ is the Borel $\sigma-$algebra of subsets of ${\mathbb X}_t,  \forall t\in\mathbb{Z}_+^n$. Points in the product space ${\mathbb X}_{1,n}\triangleq{{\prod}_{t\in\mathbb{Z}_+^n}}{\mathbb X}_t$ are denoted by $x_{1,n}\triangleq \{x_1,\ldots, x_n\} \in{\mathbb X}_{1,n}$, and their restrictions for any $(m,n)\in\mathbb{Z}_+ \times \mathbb{Z}_+$ by $x_{m,n}\triangleq \{x_{m},\ldots, x_n\} \in{\mathbb X}_{m,n}, n\geq{m}$. Hence,  ${\cal B}({\mathbb X}_{1,n})\triangleq\otimes_{t\in\mathbb{Z}_+^n}{\cal B}({\mathbb X}_t)$ denotes  the $\sigma-$algebra on ${\mathbb X}_{1,n}$ generated by cylinder sets $\{\{x_1,\ldots, x_n, \ldots,\}\in{\mathbb X}_{1,\infty}\big|x_j\in{A}_j, A_j\in{\cal B}({\mathbb X}_j),~j\in\mathbb{Z}_+^n\}$. We use the convention $X_{k,n}=X_{\max\{k,1\}, n}$ and $X_{k,n}=\{\emptyset\}, \forall k>n, (k,n)\in {\mathbb Z}\times {\mathbb Z}$. 
{\bf  The Model of  Decentralized Stochastic Network.}
We consider  a  Markovian stochastic network 
 on the  finite horizon 
 ${T}_+^{n}\tri \{1,2, \ldots, n\}$,   defined on an underlying  probability space $(\Omega, {\cal F}, {\mathbb P})$,  described  by  the following elements. 

(a) The   unobservable state process, $X_{1,n} \tri \{X_1, X_2$, $\ldots, X_n\}$, 
 $X_t: (\Omega, {\cal F}) \rar ({\mathbb X}_t, {\cal B}({\mathbb X}_t)),\hso \forall t \in {T}_+^{n}$. 
 
(b)  The observations proc. $Y_{1,n}^{(K)} \tri \{Y_{1,n}^1,   \ldots, Y_{1,n}^K\}$,  
$Y_t^k: (\Omega, {\cal F}) \rar ({\mathbb Y}_t^k,{\cal B}(    {\mathbb Y}_t^k   )), \hso \forall t \in {T}_+^{n},  \hso \forall k \in {\mathbb Z}_+^K$.

(c) The control proc.  $U_{1,n}^{(K)} \tri \{U_{1,n}^1, \ldots, U_{1,n}^K\}$,  
$U_t^k: (\Omega, {\cal F}) \rar ({\mathbb U}_t^k,{\cal B}(  {\mathbb U}_t^k   )),  \forall t \in {T}_+^{n} ,  \forall k \in {\mathbb Z}_+^K$,
where  $\big\{ {\mathbb U}_t^k | t \in {T}_+^{n}\big\}$ are referred to as the {\it  action spaces }of  $U_{1,n}^k$, $ \forall k \in {\mathbb Z}_+^K$. 

(d) The $T-$Step Delayed Sharing Patterns.  For each $(t, k)\in T_+^{n} \times {\mathbb Z}_+^K$, control  $U_t^k$ is assigned  the information  pattern $I_{t}^k $, 
\begin{align*}
& I_t^k\tri \big\{\Delta_t^{(K)}, \Lambda_t^k\big\}: (\Omega, {\cal F}) \rar ({\mathbb I}_t^k, {\cal B}({\mathbb I}_t^k)),  \forall (t, k)
\\
& 
 \Lambda_t^k: (\Omega, {\cal F}) \rar ({\mathbb L}_t^k, {\cal B}({\mathbb L}_t^k)) \;  \mbox{private comp. of control $k$,}\\
  &\Delta_t^{(K)} : (\Omega, {\cal  F}) \rar ({\mathbb D}_t, {\cal B}({\mathbb D}_t))  \; \mbox{common comp. $\forall$ controls}
\end{align*}
where the two components are specified by
\begin{align}
\Delta_t^{(K)} =   \big\{Y_{1,t-T}^{(K)}, U_{1,t-T}^{(K)}\big\},\; \Lambda_t^k =     \big\{Y_{t-T+1, t}^k, U_{t-T+1, t-1}^k \big\} \nonumber
\end{align}
for any delay $T \in {T}_+^{n}$. 
By our  convention $\Delta_t^{(K)}=\{\emptyset \}, \forall t \in {T}_+^{n}$ such that $ t <T+1$. 

(e) The strategies of the Controls.  Given,  $\{I_t^k|t \in T_+^n, k \in {\mathbb Z}_+^K\}$, 
for each $(t, k)\in T_+^{n} \times {\mathbb Z}_+^K$, control  $U_t^k$ is
generated by a  measurable functions or strategies $\gamma_t^k(\cdot)$    as follows. 
\begin{align}
U_t^k=\gamma_t^k(I_t^k)=\gamma_t^k(\Delta_t^{(K)}, \Lambda_t^k),  \;     \forall t \in {T}_+^{n},\;  \forall k \in {\mathbb Z}_+^K. \label{type_a}
\end{align}
 For each $k \in {\mathbb Z}_+^K$,  we denote by  ${\cal  U}_{1,n}^{k}$ the set of  such  admissible  strategies of the $k$th  control  $U_{1,n}^k$ over  $T_+^{n}$. \\For notational simplicity we use the definitions,
\begin{align} 
&{\cal  U}_{1,n}^{k} \tri \prod_{t=1 }^n {\cal  U}_{t}^k,\hso {\cal  U}_{1,n}^{-k} \tri \prod_{j=1,j\neq k }^K {\cal  U}_{1,n}^j, \\ 
 &\gamma_{1,n}^{-k}(\cdot)\tri   \big\{\gamma_{1,n}^{1},
 \ldots,  \gamma_{1,n}^{k-1},\gamma_{1,n}^{k+1}, \ldots, \gamma_{1,n}^{K}\big\}(\cdot)\in {\cal  U}_{1,n}^{-k}, \nonumber
 \\
& \gamma_{t}^{-k}(I_t^{-k})
 \tri   \big\{\gamma_{t}^{1}(I_t^{1}),\ldots,  \gamma_{t}^{k-1}(I_t^{k-1}),   \gamma_{t}^{k+1}(I_t^{k+1}), \ldots, \nonumber \\
 &  \ldots,  \gamma_{t}^{K}(I_t^{K})\big\}, \hso  I_t^k=\Delta_t^{(K)}\cup \Lambda_t^k,              \label{not_1a} \\
 &\equiv \gamma_t^{-k}(\Delta_t^{(K)}, \Lambda_t^{-k})\equiv \gamma_t^{-k},  \;  \forall t \in { T}_+^{n},   \;    \forall k \in {\mathbb Z}_+^K.\label{not_3}
 \end{align}

 
 Next, we introduce the various Markovian conditional probability measures (PMs). 

{\it 1)   The Conditional Probability Measure (PM)} of $X_{t+1}$ conditioned on $(X_{1,t}, Y_{1,t}^{(K)}, U_{1,t}^{(K)})$ is 
\begin{align}
& {\mathbb P} \Big\{X_{t+1} \in dx_{t+1} \big| X_{1,t},Y_{1,t}^{(K)},   U_{1,t}^{(K)}\Big\}, \;  \;   \forall t \in  T_+^{n-1}\nonumber \\
&=
 {\bf P}_{X_{t+1}|X_{t}, U_{t}^{(K)}}=S_{t+1}(dx_{t+1}|X_t, U_{t}^{(K)}).  \label{i-d1}
 \end{align}

{\it  2) The Conditional PM} of $Y_{t}^k$ conditioned on $(X_{1,t},Y_{1,t-1}^{(K)}, Y_t^{-k}, U_{1,t-1}^{(K)})$ is  
\begin{align}
 &{\mathbb P}\Big\{Y_{t}^k \in dy_t^k \big| X_{1,t},Y_{1,t-1}^{(K)},Y_t^{-k}, U_{1,t-1}^{(K)}\Big\}, \;   \forall  t\in T_+^n, k \in {\mathbb Z}_+^K \nonumber \\
 &=
 {\bf P}_{Y_{t}^k|X_{t}, U_{t-1}^{(K)}} =Q_{t}^k( dy_t^k|X_t,  U_{t-1}^{(K)})   . \label{i-d2}
 \end{align} 
From  (\ref{i-d2}), by Bayes' Theorem    we also deduce, 
\begin{align}
 &{\mathbb P} \Big\{Y_{t}^{(K)} \in dy_t^{(K)} \big| X_{1,t},Y_{1,t-1}^{(K)},  U_{1,t-1}^{(K)}\Big\}, \; \forall  t\in T_+^n  \nonumber \\
 &= Q_{t}^{(K)}(dy_t^{(K)} |X_t,  U_{t-1}^{(K)})\\
& = \prod_{k=1}^K Q_{t}^k(dy_t^k |X_t,U_{t-1}^{(K)} ). \label{i-d2-j-2}
 \end{align}

\indent {\it 3) The Conditional} PM of $U_t^{(K)}$ conditioned on $(X_{1,t},Y_{1,t}^{(K)}, U_{1,t-1}^{(K)})$ with  $\gamma^{(K)}(\cdot) \in {\cal U} _{1,n}^{(K)}$, 
\begin{align}
 &{\mathbb P} \Big\{U_{t}^{(K)} \in du_t^{(K)} \big| X_{1,t}, Y_{1,t}^{(K)}, U_{1,t-1}^{(K)}  \Big\}, \; \forall  t\in T_+^{n} \nonumber \\
  &=
 {\bf P}_{U_{t}^{(K)}|Y_{1,t}^{(K)},  U_{1,t-1}^{(K)}}= P_{t}^{(K)}(du_t^{(K)} |I_t^{(K)})     \label{i-d2-j-22} \\
& = \prod_{k=1}^K P_{t}^k(du_t^k |I_t^k)=\prod_{k=1}^K \mu_{\gamma_t^k(I_t^k)}(du_t^k )
 \label{i-d2-j-22-a}
 \end{align}
$P_{t}^k(du_t^k|I_t^k)=\mu_{\gamma_t^{k}(I_t^k)}(du_t^{k})$ is the  Dirac measure  at  $\gamma_t^{k}(I_t^k) $.

 {\it 4) The Average  Payoff  Function.}
  Given  a   $K$-tuple  $\gamma^{(K)}\tri \{\gamma^1, \ldots, \gamma^K\} \in {\cal  U}_{1,n}^{(K)} $, the average   payoff  is  
\begin{align}
&J_{n} (\gamma^{(K)}) \tri {\mathbb E}^{{\gamma^{(K)}}} \Big\{ \sum_{t=1}^{n} \ell(t, X_t, \gamma_t^{(K)}) \Big\}   \label{i8-cost_a}
\end{align}
where  $\ell(t, \cdot), \forall t$  is lower semicontinuous, bounded from below.

\indent To  deal with asymmetry of information of the  strategies, we use   the definition   of  {\it person-by-person (PbP) optimality }
 from {\it static team theory},  \cite{radner1962,marschak-radner1972}.

 \begin{definition}(Decentralized PbP Optimality) \\
 \label{def-pbp}
 The  $K-$tuple of strategies   $\gamma^{(K),o}\tri   \{\gamma^{1,o},  \ldots, \gamma^{K,o}\} \in {\cal  U}_{1,n}^{(K)}$
  is called  {\it decentralized PbP optimal}, if it satisfies 
\begin{align}
 {J}_{n}(\gamma^{k,o}, \gamma^{-k,o}) \leq  {J}_{n}(\gamma^{k}, \gamma^{-k,o}), \;
  \forall \gamma^k \in {\cal  U}_{1,n}^{k} \label{pbp}
\end{align}
 $ \forall k \in {\mathbb Z}_+^K$, where  
  \begin{align}
&J_{n}(\gamma^{k,o}, \gamma^{-k,o}) \tri  \inf_{  \gamma^{k}\in {\cal U}_{1,n}^k}   {\mathbb E}^{^{\gamma^k, \gamma^{-k,o}}} \big\{ \sum_{t=1}^{n} \ell(t, X_t,\gamma_t^k,  \gamma_t^{-k,o})\big\}\nonumber 
\end{align}
$
\big\{\gamma_t^{k}, \gamma_t^{-k,o}\big\} \tri \big\{\gamma_t^{k}(I_t^{k}),  \gamma_t^{-k,o}(I_t^{-k}) \big\}$  (see (\ref{not_1a})),  where ${J}_{n}(\gamma^{k,o}, \gamma^{-k,o})$ is the optimal payoff of  strategy $\gamma^k \in {\cal  U}_{1,n}^{k}$, corresponding to  fixed optimal responses  $\gamma^{-k,o}\in {\cal U}_{1,n}^{-k}$. 

  \end{definition}

%
%

Decentralized PbP optimality (\ref{pbp}) is 
 analogous to that of  Nash equilibrium strategies, but instead there is only one payoff common to all  strategies. 

\section{DP Equations and  Information States}
\label{discrete}
In   Lemma~\ref{lemma:payoff-pbp}   we express    payoff  $J_{n} (\gamma^{k}, \gamma^{-k,o})$ of   strategy $\gamma^k \in {\cal U}_{1,n}^k$,   w.r.t.  the \^a posteriori  PMs of  nonlinear filtering of estimating the  extended state process  $\big\{X_{t},   \Lambda_t^{-k}|t \in T_+^n \big\}$ from $\big\{I_t^k| t \in T_+^n\big\}$, $\forall k \in {\mathbb Z}_+^K$.

\begin{lemma}(Payoff of Decentralized PbP Optimality)\\
\label{lemma:payoff-pbp}
For each $k \in {\mathbb Z}_+^K$, define the  \^a posteriori  PM conditional on the realizations,  $I_t^k=\{\Delta_t^{(K)}, \Lambda_t^k\}=i_t^k=\{\delta_t^{(K)}, \lambda_t^k\}$ by\footnote{Notation $\Xi_t^k[i_t^k](dx_t, d\lambda_{t}^{-k})$ is often used  as a reminder that this is a measurable function of the data $i_t^k$.} 
\begin{align}
& \Xi_t^k[i_t^k](dx_t, d\lambda_{t}^{-k})\tri {\mathbb P}^{{\gamma^k, \gamma^{-k}}} \big\{X_{t} \in dx_t, \Lambda_t^{-k} \in d\lambda_t^{-k} \big| i_t^{k}\big\}\nonumber \\
 &= {\bf P}_t^{{\gamma^k, \gamma^{-k}}} (dx_t, d\lambda_{t}^{-k}\big|i_t^k), \;  \forall t \in T_+^n, \; \forall k \in {\mathbb Z}_+^K. \label{pPM}
\end{align} 
The    payoff $J_{n} (\gamma^{k}, \gamma^{-k,o})$ of  strategy $\gamma^k \in {\cal U}_{1,n}^k$  for fixed   optimal strategy  $\gamma^{-k}=\gamma^{-k,o}\in {\cal U}_{1,n}^{-k}$ is expressed as, 
\begin{align}
&J_{n} (\gamma^{k}, \gamma^{-k,o}) ={\mathbb E}^{^{\gamma^k, \gamma^{-k,o}}} \Big\{ \sum_{t=1}^{n} \ell(t, X_t,\gamma_t^k,  \gamma_t^{-k,o})  \Big\}\nonumber  \\
&=   {\mathbb E}^{^{\gamma^k, \gamma^{-k,o}}} \Big\{ \sum_{t=1}^{n}  \int_{{\mathbb X}_t \times {\mathbb L}_{t}^{-k}}   \ell(t, x_t,\gamma_t^k(I_t^{k}),  \gamma_t^{-k,o}(I_t^{-k}))\nonumber \\
&\hst .{\bf P}_t^{{\gamma^k, \gamma^{-k,o}}}(dx_t, d\lambda_{t}^{-k}\big|I_t^k) \Big\}, \hso \forall k \in {\mathbb Z}_+^K. \label{tpayf-pbp}
\end{align}
\end{lemma}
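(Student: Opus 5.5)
The plan is to derive (\ref{tpayf-pbp}) directly from the tower property of conditional expectation, applied term by term to the finite sum in (\ref{i8-cost_a}). Linearity of expectation is justified because $\ell(t,\cdot)$ is bounded from below, so each term is well defined in $(-\infty,+\infty]$. Each summand can be handled by monotone convergence after adding a constant, so no integrability beyond that is needed. Fix $k\in{\mathbb Z}_+^K$ and set $\gamma^{-k}=\gamma^{-k,o}$. For each $t\in T_+^n$ we have $I_t^{-k}=\{\Delta_t^{(K)},\Lambda_t^{-k}\}$ and $\Delta_t^{(K)}\subset I_t^k$. Hence the integrand $\ell(t,X_t,\gamma_t^k(I_t^k),\gamma_t^{-k,o}(\Delta_t^{(K)},\Lambda_t^{-k}))$ is a measurable function $g_t(X_t,\Lambda_t^{-k},I_t^k)$ of the pair $(X_t,\Lambda_t^{-k})$ and the conditioning variable $I_t^k$.

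The first step is to write ${\mathbb E}^{\gamma^k,\gamma^{-k,o}}\{g_t(X_t,\Lambda_t^{-k},I_t^k)\}={\mathbb E}^{\gamma^k,\gamma^{-k,o}}\{{\mathbb E}^{\gamma^k,\gamma^{-k,o}}\{g_t(X_t,\Lambda_t^{-k},I_t^k)\,|\,I_t^k\}\}$. The second step is to show the inner conditional expectation equals $\int g_t(x_t,\lambda_t^{-k},I_t^k)\,{\bf P}_t^{\gamma^k,\gamma^{-k,o}}(dx_t,d\lambda_t^{-k}|I_t^k)$. For this I will first verify that the regular conditional distribution (\ref{pPM}) exists. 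It does, because ${\mathbb X}_t\times{\mathbb L}_t^{-k}$ is Polish: a finite product of Polish spaces, the ${\mathbb L}_t^{j}$ being products of observation and action spaces. Existence therefore follows from the standard disintegration theorem. I will then invoke the "freezing" (disintegration) lemma: if $Z$ is $\sigma(I)$-measurable and $W$ has regular conditional law $\kappa(\cdot|I)$, then ${\mathbb E}\{g(W,Z)|I\}=\int g(w,Z)\kappa(dw|I)$ a.s. I will prove this first for indicators of measurable rectangles, extend it to ${\cal B}$-measurable sets by a monotone class argument, and then extend it to nonnegative or lower-bounded measurable $g$ by monotone convergence. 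Summing over $t=1,\ldots,n$ then yields (\ref{tpayf-pbp}).

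The main obstacle is the measurability bookkeeping. Since $\ell$ is only lower semicontinuous, $g_t$ must be shown to be jointly Borel measurable in $(x_t,\lambda_t^{-k},i_t^k)$, which follows from composing Borel maps with the measurable strategies $\gamma_t^k,\gamma_t^{-k,o}$. One also has to note that the conditional measure depends on the strategies $(\gamma^k,\gamma^{-k,o})$ through the joint law induced by (\ref{i-d1})--(\ref{i-d2-j-22-a}). This dependence is harmless for the present identity, which only asserts the representation and does not claim strategy-independence. Finally, because the infimum in Definition~\ref{def-pbp} plays no role in the identity, I will treat $J_n(\gamma^k,\gamma^{-k,o})$ simply as the expected cost under the fixed pair.
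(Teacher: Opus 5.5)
Your proposal is correct and takes the same route as the paper. The paper's entire proof is the one-line remark that (\ref{tpayf-pbp}) follows by re-conditioning each term of the payoff on $I_t^k$, which is exactly your tower-property step. The existence of the regular conditional distribution on the Polish space ${\mathbb X}_t\times{\mathbb L}_t^{-k}$, the freezing lemma for the $\sigma(I_t^k)$-measurable arguments $\Delta_t^{(K)}$ and $\gamma_t^k(I_t^k)$, and the use of lower-boundedness of $\ell$ to justify linearity are details the paper leaves implicit, and you supply them correctly.
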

\begin{proof}  By 
re-conditioning on $I_t^k$ we obtain (\ref{tpayf-pbp}).
\end{proof}

%
%
%
%
%

\ \

For each $k$,  payoff  (\ref{tpayf-pbp}) depends on {\it  private \^a posteriory PM of  strategy}    $\gamma^k $, i.e.,  $\big\{\Xi_t^k[i_t^k] \equiv \Xi_t^k[i_t^k] (dx_t, d\lambda_{t}^{-k}) \big|t \in T_+^n   \big\}$,  and  common  data $\big\{\Delta_t^{(K)}  \big|t \in T_+^n   \big\}$.

Lemma~\ref{lemma:nested} is  essential to our DP approach.

\begin{lemma}(Nested  Properties of Information Patterns)\\
\label{lemma:nested}
(1) The  nested properties  $\Delta_t^{(K)} \subseteq \Delta_{t+1}^{(K)}$,   $I_t^k\subseteq I_{t+1}^k$ hold.
\begin{align}
&\Delta_{t+1}^{(K)}=\big\{\Delta_{t}^{(K)}, Y_{t-T+1}^{(K)}, U_{t-T+1}^{(K)}\big\}, \hso \: \forall (t, k, T) \label{nested_1}  \\
&=\big\{\Delta_{t}^{(K)}, Y_{t-T+1}^{k}, U_{t-T+1}^{k}, Y_{t-T+1}^{-k}, U_{t-T+1}^{-k}\big\}\\
 &\Lambda_{t+1}^k = \big\{Y_{t-T+2, t+1}^{k}, U_{t-T+2,t}^{k}\big\}, \\
&I_{t+1}^k = \big\{Y_{1,t+1}^{k}, U_{1,t}^k,Y_{1,t-T+1}^{-k}, U_{1,t-T+1}^{-k}\big\},  \label{eq-nested-a}  \\
&=\big\{I_{t}^k,Y_{t+1}^k, U_t^k, Y_{t-T+1}^{-k}, U_{t-T+1}^{-k}  \big\} \label{eq-nested-a-1} \\
&=  \big\{\Delta_{t+1}^{(K)}, \Lambda_{t+1}^k\big\}, \hso  I_t^k=\big\{\Delta_{t}^{(K)}, \Lambda_{t}^k\big\} \subseteq I_{t+1}^k .  \label{eq-nested}
\end{align}
(2) For each $(t, k)\in T_+^n\times {\mathbb Z}_+^K$ and $T \in \{1,\ldots, n\}$  consider the actions $U_{t-T+1}^{-k}$ generated by strategies  
\begin{align}
U_{t-T+1}^{-k}=&  \gamma_{t-T+1}^{-k}(\Delta_{t-T+1}^{(K)}, \Lambda_{t-T+1}^{-k}) \\
=&\Big\{ \gamma_{t-T+1}^{j}(\Delta_{t-T+1}^{(K)}, \Lambda_{t-T+1}^{j})\Big\}_{j=1, j\neq k}^K . \label{dp_1_na-str1-old}.
\end{align}
Then  $\Delta_{t-T+1}^{(K)}\subseteq \Delta_{t}^{(K)}$  and 
 $\Lambda_{t-T+1}^{j} \subseteq \{ \Delta_{t}^{(K)},Y_{t-T+1}^{j}\}, \forall  j \in {\mathbb Z}_+^K,  j \neq k, \forall t \in T_+^n$, i.e., the 
arguments      of the  strategies, $\gamma_{t-T+1}^{-k}(\cdot)$ except $Y_{t-T+1}^{j},  \forall j \neq k$,  are 
specified by $\Delta_{t}^{(K)}=  \{Y_{1, t-T}^{(K)}, U_{1, t-T}^{(K)} \}$. 
\end{lemma}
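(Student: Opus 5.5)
The plan is to verify everything directly from the definitions $\Delta_t^{(K)}=\{Y_{1,t-T}^{(K)},U_{1,t-T}^{(K)}\}$ and $\Lambda_t^k=\{Y_{t-T+1,t}^k,U_{t-T+1,t-1}^k\}$ by index bookkeeping. The conventions $X_{k,n}=X_{\max\{k,1\},n}$ and $X_{k,n}=\{\emptyset\}$ for $k>n$ handle the boundary cases $t<T+1$ automatically.

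\textbf{Part (1).} First substitute $t+1$ for $t$ in the definition of the common component:
\begin{align*}
\Delta_{t+1}^{(K)}=\{Y_{1,t-T+1}^{(K)},U_{1,t-T+1}^{(K)}\}=\{\Delta_t^{(K)},Y_{t-T+1}^{(K)},U_{t-T+1}^{(K)}\}.
\end{align*}
This gives (\ref{nested_1}). Splitting the $(K)$-tuple into its $k$ and $-k$ entries gives the next line, and the inclusion $\Delta_t^{(K)}\subseteq\Delta_{t+1}^{(K)}$ is immediate.

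Next, $\Lambda_{t+1}^k$ is read off directly from its definition. For (\ref{eq-nested-a}), take the union $I_{t+1}^k=\Delta_{t+1}^{(K)}\cup\Lambda_{t+1}^k$:
\begin{itemize}
\item the $k$-th observations $Y_{1,t-T+1}^k$ and $Y_{t-T+2,t+1}^k$ concatenate to $Y_{1,t+1}^k$;
\item the $k$-th controls $U_{1,t-T+1}^k$ and $U_{t-T+2,t}^k$ concatenate to $U_{1,t}^k$;
\item the $-k$ components come only from $\Delta_{t+1}^{(K)}$.
\end{itemize}
Writing $I_t^k$ in the same form, $\{Y_{1,t}^k,U_{1,t-1}^k,Y_{1,t-T}^{-k},U_{1,t-T}^{-k}\}$, the difference between $I_{t+1}^k$ and $I_t^k$ is exactly $\{Y_{t+1}^k,U_t^k,Y_{t-T+1}^{-k},U_{t-T+1}^{-k}\}$. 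This yields (\ref{eq-nested-a-1}) and the inclusion $I_t^k\subseteq I_{t+1}^k$.

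\textbf{Part (2).} The inclusion $\Delta_{t-T+1}^{(K)}\subseteq\Delta_t^{(K)}$ follows by iterating part (1) $T-1$ times, or directly from $1-2T+1\le t-T$. For $j\neq k$,
\begin{align*}
\Lambda_{t-T+1}^j=\{Y_{t-2T+2,t-T+1}^j,\;U_{t-2T+2,t-T}^j\}.
\end{align*}
The controls have indices at most $t-T$, and the observations other than $Y_{t-T+1}^j$ have indices at most $t-T$, so all of them lie in $\Delta_t^{(K)}$. Hence $\Lambda_{t-T+1}^j\subseteq\{\Delta_t^{(K)},Y_{t-T+1}^j\}$. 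The claim about the arguments of $\gamma_{t-T+1}^{-k}$ then follows by substituting these inclusions into (\ref{dp_1_na-str1-old}).

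The only step that needs care is the boundary behaviour when $t-T+1\le 0$ or $t-2T+2<1$. Here I would check that the empty-set and $\max\{\cdot,1\}$ conventions keep every inclusion valid, since both sides then reduce to truncated or empty index ranges.
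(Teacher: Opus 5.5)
Your proof is correct and takes the same approach as the paper, which only asserts that the statements follow directly from the definitions of $\Delta_t^{(K)}$ and $\Lambda_t^k$; your index bookkeeping supplies exactly those details, including the point that $Y_{t-T+1}^k$ leaves the private component and enters the common one. One small typo: the inequality giving $\Delta_{t-T+1}^{(K)}\subseteq\Delta_t^{(K)}$ should read $t-2T+1\le t-T$ (i.e., $T\ge 1$), not $1-2T+1\le t-T$.
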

\begin{proof} The statements are easily verified   from  the definition of  information patterns.
\end{proof}


\subsection{Private and Centralized Information  States}
\label{sect:pr-isr}
In Theorem~\ref{thm:is-pbp}, for each $k$ we present  a  recursion for    the  {\it private  \^a posteriori PM},   $\big\{\Xi_t^k[I_{t}^k]={\bf P}_{t}^{ \gamma^{k}, { \gamma^{-k,o}}} (dx_{t}, d\lambda_{t}^{-k}\big|I_{t}^k)| t \in T_+^n\big\}$.
In  Theorem~\ref{thm:is-cen}, we present    Markov  recursions for  {\it centralized \^a posteriori PMs}.
 
%
%
%



%

\begin{assumptions}(Absolute Continuity)\\
\label{ass-1}
For continuous spaces,  $S_{t+1}(dx_{t+1}|x_t,  u_{t}^{(K)})$, $Q_{t}^k(dy_t^k |x_t, u_{t-1}^{(K)})$  
have probability density functions (PDFs),   
 and   for finite spaces have   transition   probability mass functions (PMFs),   $s_{t+1}(x_{t+1}|x_t,  u_{t}^{(K)}),  q_{t}^k(y_t^k|x_t,   u_{1,t}^{(K)}), \forall t$. 
\end{assumptions}


\ \

\begin{theorem}(Recursion of Private   \^a Posteriori PMs)\\
\label{thm:is-pbp}
Suppose  
Assumptions~\ref{ass-1} hold and let 
$T \in \{2,3,\ldots, n\}$. \\For each $ k  \in {\mathbb Z}_+^K$,    
 $\{\gamma_{1,n}^k,  \gamma_{1,n}^{-k}\}\in {\cal U}_{1,n}^k\times  {\cal U}_{1,n}^{-k}$, 
  the  private 
   \^a posteriori PM,  
   $\Xi_t^k[i_t^k]\equiv {\bf P}_t^{{\gamma^k, \gamma^{-k}}} (dx_t, d\lambda_{t}^{-k}\big|I_t^k=i_t^k),   i_t^k=\{\delta_t^{(K)}, \lambda_t^k\},  \forall t \in T_+^n $  satisfies   the recursion,
 \begin{align}    
&{\bf P}_{t+1}^{{\gamma^k, \gamma^{-k}}} (dx_{t+1}, d\lambda_{t+1}^{-k}\big|i_{t+1}^k),  \forall (t,k)  \label{apost_1}\  \\
&= {\bf T}_{t+1}^{k}\big(y_{t+1}^{k}, u_t^k, \gamma_t^{-k}(\delta_t^{(K)}, \cdot),{\bf P}_{t}^{{\gamma^k, \gamma^{-k}}} (\cdot\big|i_{t}^k)\big)(dx_{t+1}, d\lambda_{t+1}^{-k}) , \nonumber \\
& {\bf P}_1^{{\gamma^k, \gamma^{-k}}}(dx_1, d\lambda_1^{-k}|i_1^k)= {\bf P}_{X_1, Y_1^{-k}|Y_1^k=y_1^k}  
\end{align}
where $u_t^k=\gamma_t^k(i_t^k)$ and  the operator ${\bf T}_{t+1}^{k}[\cdot](\cdot,\cdot)$  is    
\begin{align}
&{\bf T}_{t+1}^{k}\big(y_{t+1}^{k}, u_{t}^k,  \gamma_t^{-k}(\delta_t^{(K)}, \cdot),\Xi_{t}^{k} (\cdot)\big)(dx_{t+1}, d\lambda_{t+1}^{-k}) \tri \nonumber  \\
&\frac{\overline{\bf T}_{t+1}^k\big(y_{t+1}^{k}, u_{t}^k, \gamma_t^{-k}(\delta_t^{(K)}, \cdot),\Xi_{t}^{k} (\cdot)](dx_{t+1}, d\lambda_{t+1}^{-k}) }{\int_{{\mathbb X}_{t+1}  \times {\mathbb L}_{ t+1}^{-k}  } \overline{\bf T}_{t+1}^k[y_{t+1}^{ k}, u_{t}^k, \gamma_t^{-k}(\delta_t^{(K)},\cdot),\Xi_{t}^{k} (\cdot)\big)(dx_{t+1},d\lambda_{t+1}^{-k} )   }, \nonumber  \\
&\overline{\bf T}_{t+1}^k\big(y_{t+1}^{k}, u_{t}^k, \gamma_t^{-k}(\delta_t^{(K)}, \cdot),\Xi_{t}^{k} (\cdot)\big)(dx_{t+1},d\lambda_{t+1}^{-k}) \nonumber\\\
&\tri       \int_{{\mathbb X}_t} Q_{t+1}^{k}(dy_{t+1}^{k}|x_{t+1}, u_t^k, \gamma_t^{-k}(\delta_t^{(K)}, \lambda_t^{-k}))  \nonumber \\
&.\prod_{j=1, j\neq k}^K Q_{t+1}^{j}(dy_{t+1}^{j}|x_{t+1}, u_t^k, \gamma_t^{-k}(\delta_t^{(K)},\lambda_t^{-k})) \nonumber\\ 
&. S_{t+1}(dx_{t+1}\big|x_{t},u_{t}^k, \gamma_t^{-k}(\delta_t^{(K)}, \lambda_t^{-k})) \nonumber \\
& . \prod_{j=1, j \neq k}^K P_t^j(du_t^{j}\big| \delta_t^{(K)} , \lambda_t^j) \; \Xi_{t}^{k} [i_t^k](dx_{t}, d\lambda_{t}^{-k}) \label{apost_1a} 
\end{align}
$P_t^j(du_t^{j}\big| \delta_t^{(K)}, \lambda_t^j)  = { \mu}_{\gamma_t^{j}(\delta_t^{(K)}, \lambda_t^j)}(du_t^{j})$  (the  Dirac PM).\\
For  $T=1$,  (\ref{apost_1})-(\ref{apost_1a}) hold with  the term $\prod_{j=1, j \neq k}^K P_t^j(du_t^{j}\big| \delta_t^{(K)}, \lambda_t^j)$ removed  (i.e.,  $\lambda_{t+1}^{-k}=Y_{t+1}^{-k}$). 
\end{theorem}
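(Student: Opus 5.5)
The plan is a standard Bayes-rule (nonlinear filtering) derivation. The main work is tracking how the extended state $(X_t,\Lambda_t^{-k})$ evolves to $(X_{t+1},\Lambda_{t+1}^{-k})$ and how the new data in $I_{t+1}^k$ enters, using the nested structure of Lemma~\ref{lemma:nested}.

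First I will write down what is added to the information and to the state. By (\ref{eq-nested-a-1}), $I_{t+1}^k=\{I_t^k, Y_{t+1}^k, U_t^k, Y_{t-T+1}^{-k}, U_{t-T+1}^{-k}\}$. For $T\geq 2$, $\Lambda_{t+1}^{-k}=\{Y_{t-T+2,t+1}^{-k},U_{t-T+2,t}^{-k}\}$. It is obtained from $\Lambda_t^{-k}$ by appending $(Y_{t+1}^{-k},U_t^{-k})$ and dropping $(Y_{t-T+1}^{-k},U_{t-T+1}^{-k})$. The dropped components are exactly the ones that move into $\Delta_{t+1}^{(K)}$, so they become observed by control $k$. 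The variable $U_t^k=\gamma_t^k(i_t^k)$ is a known function of $i_t^k$, so conditioning on it adds nothing.

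Next I will form the joint conditional law of $(X_t,\Lambda_t^{-k},U_t^{-k},X_{t+1},Y_{t+1}^{(K)})$ given $i_t^k$. The chain rule applies as follows:
\begin{itemize}
\item $\Xi_t^k[i_t^k](dx_t,d\lambda_t^{-k})$ gives the law of the current extended state.
\item By (\ref{i-d2-j-22-a}), the actions $U_t^j$, $j\neq k$, given $(x_t,\lambda_t^{-k},i_t^k)$ have the Dirac law $\mu_{\gamma_t^j(\delta_t^{(K)},\lambda_t^j)}$. These are determined by $(\delta_t^{(K)},\lambda_t^j)$ alone.
\item By (\ref{i-d1}), the transition $S_{t+1}(dx_{t+1}|x_t,u_t^k,u_t^{-k})$ holds. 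The Markov property justifies dropping the remaining conditioning, since $i_t^k$ and $\lambda_t^{-k}$ are functions of past $(Y,U)$.
\item By (\ref{i-d2})--(\ref{i-d2-j-2}), the observations $Y_{t+1}^{(K)}$ given $(x_{t+1},u_t^{(K)})$ have the product law $\prod_j Q_{t+1}^j$.
\end{itemize}
Integrating out $x_t$ gives exactly $\overline{\bf T}_{t+1}^k$ in (\ref{apost_1a}), viewed as a joint measure in $(x_{t+1},\lambda_t^{-k},u_t^{-k},y_{t+1}^{-k})$ with $y_{t+1}^k$ fixed. Under Assumptions~\ref{ass-1}, the kernels $Q_{t+1}^k$ have densities, so conditioning on $Y_{t+1}^k=y_{t+1}^k$ is legitimate by Bayes' theorem.

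The main obstacle is the bookkeeping of the variables $(Y_{t-T+1}^{-k},U_{t-T+1}^{-k})$. These components are integrated over as part of $\lambda_t^{-k}$, but they must be conditioned on, since they belong to $\delta_{t+1}^{(K)}$. The other components of $\lambda_t^{-k}$, together with $(u_t^{-k},y_{t+1}^{-k})$, are relabeled as $\lambda_{t+1}^{-k}$. I would handle this by splitting $\lambda_t^{-k}=(\tilde\lambda,\lambda')$ with $\tilde\lambda=(y_{t-T+1}^{-k},u_{t-T+1}^{-k})$. Conditioning on $\tilde\lambda$ being the observed realization then means evaluating the disintegration at that point. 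The measure is then normalized over $(x_{t+1},\lambda_{t+1}^{-k})$.

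The result is the ratio ${\bf T}_{t+1}^k$. The observed $\tilde\lambda$ enters only through $\delta_{t+1}^{(K)}$, and the denominator is the predictive density of the new data. Lemma~\ref{lemma:nested}(2) confirms that $U_{t-T+1}^{-k}$ depends only on $\delta_t^{(K)}$ and $Y_{t-T+1}^{-k}$. This makes the conditioning consistent with the Dirac factors.

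For the initial condition, $I_1^k=\{Y_1^k\}$ and $\Lambda_1^{-k}=Y_1^{-k}$. So ${\bf P}_1$ is the conditional law of $(X_1,Y_1^{-k})$ given $Y_1^k=y_1^k$.

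For $T=1$, $\Lambda_t^{-k}$ contains only $Y_t^{-k}$ and no past actions of the other controls. Hence no $U_t^{-k}$ needs to be carried, since it becomes part of $\delta_{t+1}^{(K)}$. The Dirac product then drops out, and $\lambda_{t+1}^{-k}=y_{t+1}^{-k}$, which gives the stated modification.
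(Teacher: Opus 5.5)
Your proposal is correct and takes essentially the same route as the paper, whose proof simply defers to the standard Bayes-rule nonlinear-filtering derivation of Kumar--Varaiya for centralized POMDPs, applied here to the extended state $(X_t,\Lambda_t^{-k})$. Your explicit disintegration over the components $(y_{t-T+1}^{-k},u_{t-T+1}^{-k})$ of $\lambda_t^{-k}$ that migrate into $\delta_{t+1}^{(K)}$ (with $u_{t-T+1}^{-k}$ fixed by $(\delta_t^{(K)},y_{t-T+1}^{-k})$ via Lemma~\ref{lemma:nested}(2)) is the right way to read (\ref{apost_1a}), which leaves that conditioning implicit, and the marginal density of that component cancels in the normalization.
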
 
\begin{proof} This is shown using variations of   \cite{kumar-varayia:B1986}. 
\end{proof}


By Theorem~\ref{thm:is-pbp},     $\Xi_{t+1}^k[i_{t+1}^k]$, is updated  using  $\Xi_{t}^k[i_{t}^k]$,   $y_{t+1}^k$,  the value 
 of  control action  $u_t^k=\gamma_t^k(i_t^k)$ at $i_t^k$ only, and the strategies $\gamma_t^{-k}(\delta_t^{(K)}, \cdot)$ at $\delta_t^{(K)}$. This leads  Lemma~\ref{lem:ind},  a generalization of  a property of    centralized POMDP \cite{kumar-varayia:B1986}.


\begin{lemma}(Indep. of  $k$  \^a Posteriori PM on  $k$ Strategy)\\
\label{lem:ind}
 ${\bf P}_{t}^{{\gamma^k, \gamma^{-k}}} (dx_{t}, d\lambda_{t}^{-k}\big|I_{t}^k=i_{t}^k)={\bf P}_{t}^{{\gamma^{-k}}} (dx_{t}, d\lambda_{t}^{-k}\big|I_{t}^k)$,  $\forall (t,    \gamma^k), \forall k$,  i.e.,  depends on the actions $u_{1,n}^k\in {\mathbb U}_{1,n}^k$ and not   the strategies  $ \gamma^k(\cdot)\in {\cal  U}_{1,n}^k$,. 
%
%
\end{lemma}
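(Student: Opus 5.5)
We prove the statement by induction on $t$, using the recursion of Theorem~\ref{thm:is-pbp}. The claim to establish is this: for every $t$ there is a measurable map $\Phi_t^k$ such that
\begin{align*}
{\bf P}_{t}^{\gamma^k,\gamma^{-k}}(\cdot|i_t^k)=\Phi_t^k\big(y_{1,t}^k,u_{1,t-1}^k,\delta_t^{(K)};\gamma_{1,t-1}^{-k}\big)(\cdot),
\end{align*}
where $\Phi_t^k$ does not involve $\gamma_{1,n}^k$. Here $\delta_t^{(K)}$ contains $u_{1,t-T}^{k}$ as data, and $\lambda_t^k$ contains $u_{t-T+1,t-1}^k$ as data. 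So every occurrence of the $k$th control enters only through its realized values $u_{1,t-1}^k$, which are components of the conditioning realization $i_t^k$, and never through the function $\gamma^k(\cdot)$.

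\textbf{Base case.} For $t=1$, Theorem~\ref{thm:is-pbp} gives ${\bf P}_1^{\gamma^k,\gamma^{-k}}(dx_1,d\lambda_1^{-k}|i_1^k)={\bf P}_{X_1,Y_1^{-k}|Y_1^k=y_1^k}$. This is determined by the initial distribution and $Q_1^{(K)}$ alone, and no strategy appears.

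\textbf{Inductive step.} Assume the claim holds at $t$. By (\ref{apost_1})--(\ref{apost_1a}), the measure at $t+1$ is ${\bf T}_{t+1}^k$ applied to the following arguments:
\begin{enumerate}
\item $y_{t+1}^k$, which is data;
\item $u_t^k$, which is a coordinate of $i_{t+1}^k$ by (\ref{eq-nested-a-1});
\item $\gamma_t^{-k}(\delta_t^{(K)},\cdot)$, which involves only the other strategies;
\item ${\bf P}_t^{\gamma^k,\gamma^{-k}}(\cdot|i_t^k)$, which by the induction hypothesis is free of $\gamma^k$.
\end{enumerate}
The kernels $S_{t+1}$ and $Q_{t+1}^j$ are evaluated at $u_t^k$, and the Dirac measures $P_t^j$ belong to $j\neq k$. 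Hence the unnormalized operator $\overline{\bf T}_{t+1}^k$, and therefore its normalization, depends on $\gamma^k$ only through the realized value $u_t^k$.

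One point needs care. When we condition on $I_{t+1}^k=i_{t+1}^k$, the realization $u_t^k$ is part of the conditioning data, so the relation $u_t^k=\gamma_t^k(i_t^k)$ only says which $i_{t+1}^k$ are reachable. It does not change the conditional law on that event. The usual centralized argument of \cite{kumar-varayia:B1986} handles this: the joint law of $(X_{1,t+1},Y_{1,t+1}^{(K)},U_{1,t}^{(K)})$ factorizes via (\ref{i-d1}), (\ref{i-d2-j-2}) and (\ref{i-d2-j-22-a}). The factors $\prod_s\mu_{\gamma_s^k(i_s^k)}(du_s^k)$ are functions of $i_{t+1}^k$ only, so they appear identically in the numerator and denominator of Bayes' formula and cancel. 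This holds wherever the conditional density is defined.

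\textbf{Main obstacle.} I expect the main difficulty to be rigor in this cancellation for general Polish spaces. Conditional measures are defined only almost surely with respect to a law that does depend on $\gamma^k$. I would resolve this by working with the Bayes formula under Assumptions~\ref{ass-1}, so that the densities give a version of the conditional measure defined for every $i_{t+1}^k$. That version is explicitly strategy-free, and we identify the lemma's object with it. The case $T=1$ follows in the same way with the $P_t^j$ factor removed.
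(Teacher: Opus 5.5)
Your proposal is correct and takes the same approach as the paper: the paper proves the lemma by induction on $t$ through the recursion of Theorem~\ref{thm:is-pbp}, mirroring the centralized POMDP argument of \cite[Lemma~5.10]{kumar-varayia:B1986}. Your explicit cancellation of the Dirac factors $\mu_{\gamma_s^k(i_s^k)}$ in Bayes' formula, and your choice of a strategy-free version of the conditional measure, fill in details that the paper's one-line proof leaves implicit.
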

\begin{proof} The prove is  similar to centralized POMDP
\cite[Lemma~5.10, pp.81]{kumar-varayia:B1986} (by induction).  
%
%
\end{proof}

The analog of  Lemma~\ref{lem:ind} does not hold in  studies \cite{nayyar-mahajan-teneketzis2011,nayyar-mahajan-teneketzis2013,nayyar-teneketzis2019}, because they  use    PM, ${\bf P}_{t}^{\gamma^{(K)}} (dx_{t}, d\lambda_{t}^{-k}\big|\Delta_{t}^{(K)})$ or variants, where the  conditioning is on  $\Delta_{t}^{(K)},\forall  t$.

\begin{theorem}(Recursions  of Centralized \^a Poster. PMs)\\
\label{thm:is-cen}
Suppose Assumnptions~\ref{ass-1} hold. \\
\indent (1) For any   
 $\gamma^{(K)}\in  {\cal U}_{1,n}^{(K)}$, 
the centralized \^a posteriori PM,   $ \Pi_{t}^{(K)}[\delta_t^{(K)}] \equiv \Pi_{t}^{(K)}[\delta_t^{(K)}](dx_{t-T}) \tri {\bf P}_t^{\gamma^{(K)}} (dx_{t-T}\big|\Delta_t^{(K)}=\delta_t^{(K)})$,   does not depend on  strategies $\gamma^{(K)}(\cdot)$, and  satisfies the Markov recursion, 
 \begin{align}    
&{\bf P}_{t+1} (dx_{t-T+1}\big|\delta_{t+1}^{(K)}) \hst   t=T+1, \ldots,  n \label{apost_c}\  \\
&= {\bf T}_{t+1}^{(K)}\big(y_{t-T+1}^{(K)},u_{t-T}^{(K)} ,{\bf P}_{t} (\cdot\big|\delta_{t}^{(K)})\big)(dx_{t-T+1}) , \nonumber \\
& {\bf P}_{t}(dx_{t-T}|\delta_{t}^{(K)})\big|_{t=T+1}= {\bf P}_{X_1|Y_{1}^{(K)}=y_{1}^{(K)}} , \\ 
&{\bf T}_{t+1}^{(K)}\big(y_{t-T+1}^{(K)},  u_{t-T}^{(K)},\Pi_{t}^{(K)} (\cdot)\big)(dx_{t-T+1})  \nonumber  \\
&\tri\frac{ \overline{\bf T}_{t+1}^{(K)}\big(y_{t-T+1}^{(K)}, u_{t-T}^{(K)}, \Pi_{t}^{(K)} (\cdot)\big)(dx_{t-T+1}) }  { \int_{{\mathbb X}_{t-T+1} } \overline{\bf T}_{t+1}^{(K)}\big(y_{t-T+1}^{ (K)}, u_{t-T}^{(K)}, \Pi_{t}^{(K)} (\cdot)\big)(dx_{t-T+1})  }, \nonumber  \\
&\overline{\bf T}_{t+1}^{(K)}\big(y_{t-T+1}^{(K)}, u_{t-T}^{(K)},\Pi_{t}^{(K)} (\cdot)\big)(dx_{t-T+1}) \nonumber\\
&\tri       \int_{{\mathbb X}_{t-T}} S_{t-T+1}(dx_{t-T+1}\big|x_{t-T},u_{t-T}^{(K)})\nonumber  \\
&. Q_{t-T+1}^{k}(dy_{t-T+1}^{k}|x_{t-T+1}, u_{t-T}^{(K)})  \nonumber \\
&. Q_{t-T+1}^{-k}(dy_{t-T+1}^{-k}|x_{t-T+1}, u_{t-T}^{(K)}) 
\Pi_{t}^{(K)} (dx_{t-T}) \label{ope_is} 
\end{align}
(2) For any   
 $\gamma^{(K)}\in  {\cal U}_{1,n}^{(K)}$, 
the centralized \^a posteriori PM,   $ \Theta_{t}^{\gamma^{(K)}}[\delta_t^{(K)}] \equiv \Theta_{t}^{\gamma^{(K)}}[\delta_t^{(K)}](dx_{t}, d\lambda_{t}^{(K)}) \tri {\bf P}_t^{\gamma^{(K)}} (dx_{t}, d\lambda_{t}^{(K)}\big|\Delta_t^{(K)}=\delta_t^{(K)})$, satisfies the Markov recursion, 
 \begin{align}    
&{\bf P}_{t+1}^{\gamma^{(K)}} (dx_{t+1}, d\lambda_{t+1}^{(K)}\big|\delta_{t+1}^{(K)}) \hst  \forall t \in T_+^{n-1} \label{apost_c-1}\  \\
&= {\bf T}_{t+1}^{(K)}\big(y_{t+1}^{(K)}, \gamma_{t}^{(K)},{\bf P}_{t}^{\gamma^{(K)}} (\cdot\big|\delta_{t}^{(K)})\big)(dx_{t+1},d\lambda_{t+1}^{(K)}) , \nonumber \\
& {\bf P}_{1}^{\gamma^{(K)}}(dx_{1},d\lambda_{1}^{(K)}|\Delta_1^{(K)})={\bf P}_{X_1, Y_1^{(K)}}, \\
&{\bf T}_{t+1}^{{(K)}}\big(y_{t+1}^{(K)}, \gamma_{t}^{(K)},\Theta_{t}^{\gamma^{(K)}} (\cdot)](dx_{t+1},d\lambda_{t+1}^{(K)}) \nonumber  \\
& \tri\frac{ \overline{\bf T}_{t+1}^{{(K)}}\big(y_{t+1}^{(K)}, \gamma_{t}^{(K)}, \Theta_{t}^{\gamma^{(K)}} (\cdot)\big)(dx_{t+1},d\lambda_{t+1}^{(K)}) }  { \int_{{\mathbb S}_{t+1}^{(K)} } \overline{\bf T}_{t+1}^{{(K)}}\big(y_{t+1}^{ (K)}, \gamma_{t}^{(K)},\Theta_{t}^{\gamma^{(K)}} (\cdot)\big)(dx_{t+1},d\lambda_{t+1}^{(K)})  }, \nonumber 
\\
&\overline{\bf T}_{t+1}^{{(K)}}\big(y_{t+1}^{(K)}, \gamma_{t}^{(K)},\Theta_{t}^{\gamma^{(K)}} (\cdot)\big)(dx_{t+1}, d\lambda_{t+1}^{(K)})\nonumber \\
& \tri       \int_{{\mathbb X}_{t}} Q_{t+1}^{(K)}(dy_{t+1}^{(K)}|x_{t+1}, u_{t}^{(K)}) S_{t+1}(dx_{t+1}\big|x_{t},u_{t}^{(K)})   \nonumber 
\end{align}
\begin{align}
&.  \prod_{j=1}^KP_t^j(du_t^{j}\big| \delta_t^{(K)}, \lambda_t^j)   \Theta_{t}^{\gamma^{(K)}} (dx_{t},d\lambda_{t}^{(K)})\label{ope_is-1} 
\end{align}
where $\gamma_{t}^{(K)}=\{\gamma_{t}^{k}(\delta_t^{(K)}, \lambda_t^k)\}_{k=1}^K$, 
$ P_t^j(du_t^{j}\big| \delta_t^{(K)}, \lambda_t^j)  = { \mu}_{\gamma_t^{j}(\delta_t^{(K)}, \lambda_t^j)}(du_t^{j}), \forall j$, and ${\mathbb S}_{t+1}^{(K)}\tri {\mathbb X}_{t+1}\times {\mathbb L}_{t+1}^{(K)} , \forall t$.
\end{theorem}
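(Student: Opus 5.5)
Both parts are Bayes-rule filtering arguments on a joint density/PMF, available under Assumptions~\ref{ass-1}. For each, I will write the joint law of the relevant variables as a product of kernels (\ref{i-d1}), (\ref{i-d2-j-2}), (\ref{i-d2-j-22-a}), use the nested structure of Lemma~\ref{lemma:nested} to identify the new conditioning data, and normalize.

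\textbf{Part (1).} The key observation is that $\Delta_t^{(K)}=\{Y_{1,t-T}^{(K)},U_{1,t-T}^{(K)}\}$ contains \emph{all} past actions $U_{1,t-T}^{(K)}$ and all observations up to $t-T$. The quantity $\Pi_t^{(K)}$ is therefore exactly the centralized POMDP filter for $X_{t-T}$ given the full history up to time $t-T$. To prove strategy independence, I will write the joint law of $(X_{1,s},Y_{1,s}^{(K)},U_{1,s}^{(K)})$ with $s=t-T$ as
\[
\prod_{r} S_r\, Q_r^{(K)} \prod_{j}\mu_{\gamma_r^j(I_r^j)}(du_r^j).
\]
Conditioned on a realization $\delta_t^{(K)}$, the action factors $\mu_{\gamma_r^j(i_r^j)}(du_r^j)$ for $r\le s$ depend only on the data. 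I expect this to be the main subtlety. For $r\le s$, the information $I_r^j$ is a function of $Y^{(K)}_{1,r}$ and $U^{(K)}_{1,r-1}$, all contained in $\delta_t^{(K)}$; so each Dirac factor is either $1$ or $0$ on the realization, and it appears identically in the numerator and in the normalizing denominator. It thus cancels, exactly as in \cite[Lemma~5.10]{kumar-varayia:B1986}; this also gives the analogue of Lemma~\ref{lem:ind}.

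The recursion then follows from $\delta_{t+1}^{(K)}=\{\delta_t^{(K)},y_{t-T+1}^{(K)},u_{t-T+1}^{(K)}\}$ by (\ref{nested_1}). I will compute the predictive step by integrating $S_{t-T+1}(dx_{t-T+1}|x_{t-T},u_{t-T}^{(K)})$ against $\Pi_t^{(K)}$. The correction step multiplies by $Q^{(K)}_{t-T+1}=Q^k\cdot Q^{-k}$ from (\ref{i-d2-j-2}). Conditioning additionally on the newly revealed action $u_{t-T+1}^{(K)}$ does not change the law of $X_{t-T+1}$, since that action is a deterministic function of the conditioning data together with private components; its Dirac factor again cancels by the argument above. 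Normalizing yields ${\bf T}_{t+1}^{(K)}$ in (\ref{ope_is}). The initialization $\mathbf P_{X_1|Y_1^{(K)}}$ at $t=T+1$ is Bayes' rule for the first observation.

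\textbf{Part (2).} Here the unknowns are $(X_t,\Lambda_t^{(K)})$, and the private components are not functions of $\delta_t^{(K)}$. The action factors $\prod_j P_t^j(du_t^j|\delta_t^{(K)},\lambda_t^j)$ must therefore be kept inside the integral, which is why the resulting filter depends on $\gamma^{(K)}$. I will write
\[
\Lambda_{t+1}^{j}=\{\Lambda_t^j\setminus\{Y_{t-T+1}^j,U_{t-T+1}^j\},\,Y_{t+1}^j,\,U_t^j\}
\]
via Lemma~\ref{lemma:nested}. The conditional law of $(X_{t+1},\Lambda_{t+1}^{(K)})$ and the newly shared $(y_{t-T+1}^{(K)},u_{t-T+1}^{(K)})$ given $\delta_t^{(K)}$ is then obtained by integrating $\Theta_t$ against
\[
S_{t+1}\,Q_{t+1}^{(K)}\prod_j \mu_{\gamma_t^j(\delta_t^{(K)},\lambda_t^j)}(du_t^j).
\]
The components of $\lambda_t^{(K)}$ that move into $\delta_{t+1}^{(K)}$ are evaluated at their realized values rather than integrated; this is what the notation of (\ref{ope_is-1}) implicitly encodes. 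Normalizing over ${\mathbb S}_{t+1}^{(K)}$ gives (\ref{ope_is-1}). Since $\Delta_1^{(K)}=\emptyset$, the initial condition is simply the prior law ${\bf P}_{X_1,Y_1^{(K)}}$. The main bookkeeping obstacle I anticipate is precisely this handling of the variables shifting from $\Lambda$ into $\Delta$ in part (2). I would treat it by conditioning in two stages: first compute the unnormalized joint law, then condition on the revealed components by evaluation, and normalize last.
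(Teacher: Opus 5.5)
Your proposal is correct and is essentially the derivation the paper has in mind. The paper's proof only says these are centralized filters whose recursions are standard \cite{kumar-varayia:B1986}. Your argument writes that standard computation out: in part (1) the Dirac action factors cancel because every $I_r^j$ feeding them is itself part of the conditioning data. In part (2) you propagate with $S_{t+1}Q_{t+1}^{(K)}\prod_j\mu_{\gamma_t^j}$, condition on the newly shared components of $\lambda_t^{(K)}$ by evaluation, and then normalize, which is a sensible reading of the paper's loose notation in (\ref{ope_is-1}).

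The only slip is bookkeeping. Your formula $\Lambda_{t+1}^{j}=\{\Lambda_t^j\setminus\{Y_{t-T+1}^j,U_{t-T+1}^j\},Y_{t+1}^j,U_t^j\}$ holds only for $T\ge 2$. For $T=1$ one has $\Lambda_t^j=\{Y_t^j\}$ and $\Lambda_{t+1}^j=\{Y_{t+1}^j\}$, and $U_t^{(K)}$ passes directly into $\Delta_{t+1}^{(K)}$. Your two-stage conditioning still covers this case unchanged: the factor $\prod_j\mu_{\gamma_t^j}(du_t^j)$ is then evaluated at the revealed $u_t^{(K)}$ and cancels.
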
 
\begin{proof} These  are     centralized \^a posteriori PMs, hence their derivations are  standard, i.e.,  \cite{kumar-varayia:B1986}.
\end{proof}

\ \

We note that  $\Pi_{t+1}^{(K)}[\Delta_{t+1}^{(K)}] $ depends  on actions  $  u_{t-T}^{(K)}$,  while  recursion $ \Theta_{t+1}^{\gamma^{(K)}}[\Delta_{t+1}^{(K)}]  $ depends on strategies  $\gamma^{(K)}\in {\cal U}_{t}^{(K)}$.

Our   DP approach is based on structural properties of optimal strategies based on  Definition~\ref{def:is-ss}.

\begin{definition} Semi-Separated,  Separated,  Information State  Strategies)\\ 
\label{def:is-ss}
%
For each $k \in {\mathbb Z}_+^k$,  strategies $\gamma^k \in {\cal U}_{1,n}^k$  are   called,\\
 i) {\it semi-separated denoted  by ${\cal U}_{1,n}^{k,s-sep}\subseteq {\cal U}_{1,n}^k$, }     if  $U_t^k=g_t^k(\Xi_t^k[I_t^k],\Delta_t^{(K)}, \Lambda_{t}^k)$, i.e.,   ${\cal G}_t^k =\{\Xi_t^k[I_t^k],\Delta_t^{(K)}, \Lambda_t^k\}, \forall t \in T_+^n$ is a sufficient statistic  for $\gamma^k \in {\cal U}_{1,n}^k$ ; \\
 ii)  {\it separated   denoted  by ${\cal U}_{1,n}^{k,sep}\subseteq {\cal U}_{1,n}^k$, }   if  $U_t^k=g_t^k(\Xi_t^k[I_t^k], \Pi_t^{(K)}[\Delta_t^{(K)}],\Lambda_t^k )$ or $U_t^k=g_t^k(\Xi_t^k[I_t^k], \Theta_t^{\gamma^{(K)}}[\Delta_t^{(K)}],\Lambda_t^k )$, 
  $ \forall t \in T_+^n$;\\
iii) {\it information  state     denoted  by ${\cal U}_{1,n}^{k,is}\subseteq {\cal U}_{1,n}^k$ }   if   $U_t^k=g_t^k(\Xi_t^k[I_t^k], \Pi_t^{(K)}[\Delta_t^{(K)}])$, or $U_t^k=g_t^k(\Xi_t^k[I_t^k], \Theta_t^{\gamma^{(K)}}[\Delta_t^{(K)}])$,
$\forall t \in T_+^n$.
\end{definition}

%
%

Next, we show 
$ \big\{\Xi_t^k[i_t^k]={\bf P}_t^{{\gamma^k, \gamma^{-k}}} (dx_t, d\lambda_{t}^{-k}\big|I_t^k=i_t^k)\big|t\in T_+^n\big\}$  is  Markov conditional on  $ \big\{\Delta_t^{(K)}\big|t\in T_+^n\big\}$.

\ \

\begin{lemma}(Conditional Markov Prop. of  Private  \^a Posteriori PMs)
\label{lem:markov}

\indent (1)  The process  $\{ \Xi_t^{k}[I_t^k]| t \in T_+^n\}$ is  conditionally  Markov     w.r.t. $\{ \Delta_t^{(K)} | t \in T_+^n\}$,  i.e.,  
 \begin{align}
&{\mathbb  P}^{\gamma^{k}, \gamma^{-k}}\Big\{\Xi_{t+1}^k \in  d\xi_{t+1}^k \Big| I_t^k=i_t^k, U_t^k=u_t^k\Big\} \label{ext-m}   \\
&={\bf P}_{t+1}^{\gamma^{-k}}(d\xi_{t+1}^k \Big| \Xi_t^k=\xi_t^k, \Delta_t^{(K)}=\delta_t^{(K)}, U_t^k=u_t^k),\forall t, k\nonumber
\end{align}
\indent (2) 
 For separated strategies   $\gamma^{(K)}\in {\cal U}_{1,n}^{(K),sep}$  the process  $\{ \Xi_t^{k}[I_t^k] | t \in T_+^n\}$ is conditionally Markov w.r.t..   $\{ \Pi_t^{{(K)}}[\Delta_t^{(K)}] | t \in T_+^n\}$ or  $\{ \Theta_t^{\gamma^{(K)}}[\Delta_t^{(K)}] | t \in T_+^n\}$, i.e., (\ref{ext-m}) holds with    $\Delta_t^{(K)}$ replaced by $\Pi_t^{(K)}$ or $\Theta_t^{{(K)}}$.
\end{lemma}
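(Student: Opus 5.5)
The plan is to obtain (1) directly from the filter recursion of Theorem~\ref{thm:is-pbp}, in the same way as the classical argument for centralized POMDPs in \cite{kumar-varayia:B1986}. By Theorem~\ref{thm:is-pbp}, with $u_t^k=\gamma_t^k(i_t^k)$,
\begin{align*}
\Xi_{t+1}^k = {\bf T}_{t+1}^k\big(Y_{t+1}^k, u_t^k, \gamma_t^{-k}(\delta_t^{(K)},\cdot), \Xi_t^k\big).
\end{align*}
So $\Xi_{t+1}^k$ is a measurable function of $(Y_{t+1}^k, u_t^k, \delta_t^{(K)}, \xi_t^k)$, where $\delta_t^{(K)}$ enters only through the fixed map $\gamma_t^{-k}(\delta_t^{(K)},\cdot)$. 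Therefore
\begin{align*}
{\mathbb P}\{\Xi_{t+1}^k\in A \,|\, i_t^k, u_t^k\} = \int 1_A\big({\bf T}_{t+1}^k(y,u_t^k,\gamma_t^{-k}(\delta_t^{(K)},\cdot),\xi_t^k)\big)\, {\mathbb P}\{Y_{t+1}^k\in dy \,|\, i_t^k,u_t^k\}.
\end{align*}
The task then reduces to showing that the one-step predictive law of $Y_{t+1}^k$ given $(i_t^k,u_t^k)$ depends on $i_t^k$ only through $(\xi_t^k,\delta_t^{(K)})$.

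For that step I would condition on $(X_t,\Lambda_t^{-k})$ and use the Markov kernels (\ref{i-d1})--(\ref{i-d2-j-22-a}). The actions $U_t^{-k}=\gamma_t^{-k}(\delta_t^{(K)},\Lambda_t^{-k})$ are determined by $\Lambda_t^{-k}$ and $\delta_t^{(K)}$. Lemma~\ref{lemma:nested}(2) ensures that nothing else from the past is needed. The state transition $S_{t+1}$ depends only on $(X_t,U_t^{(K)})$, and $Q_{t+1}^k$ depends only on $(X_{t+1},U_t^{(K)})$. Hence
\begin{align*}
{\mathbb P}\{Y_{t+1}^k\in dy\,|\,i_t^k,u_t^k\}=\int Q_{t+1}^k(dy|x_{t+1},u_t^k,\gamma_t^{-k}(\delta_t^{(K)},\lambda_t^{-k}))\, S_{t+1}(dx_{t+1}|x_t,u_t^k,\gamma_t^{-k}(\delta_t^{(K)},\lambda_t^{-k}))\,\xi_t^k(dx_t,d\lambda_t^{-k}),
\end{align*}
which is exactly the normalizing constant of $\overline{\bf T}_{t+1}^k$ marginalized over $y^{-k}$. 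This step is the main obstacle. One must check that, conditionally on $(X_t,\Lambda_t^{-k},I_t^k,U_t^k)$, the pair $(X_{t+1},Y_{t+1}^k)$ is independent of the remaining past. This follows from (\ref{i-d1}), (\ref{i-d2}) and the fact that $I_t^k\cup\Lambda_t^{-k}\supseteq$ everything the other controls use at time $t$. The result is a kernel ${\bf P}_{t+1}^{\gamma^{-k}}(\cdot|\xi_t^k,\delta_t^{(K)},u_t^k)$. By Lemma~\ref{lem:ind}, this kernel does not depend on $\gamma^k$, which justifies the superscript $\gamma^{-k}$ only. For $T=1$ the same computation applies with the $P_t^j$ factors dropped.

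For (2), take separated strategies $\gamma^{(K)}\in{\cal U}_{1,n}^{(K),sep}$. The only place $\delta_t^{(K)}$ enters the kernel above is through $\gamma_t^{-k}(\delta_t^{(K)},\cdot)$. Under separation, $\gamma_t^j(\delta_t^{(K)},\lambda_t^j)=g_t^j(\Xi_t^j,\Pi_t^{(K)}[\delta_t^{(K)}],\lambda_t^j)$ (or the same with $\Theta_t^{\gamma^{(K)}}$), where $\Xi_t^j$ is a function of $i_t^j=(\delta_t^{(K)},\lambda_t^j)$. The remaining subtlety is that $\Xi_t^j$ itself depends on the full $\delta_t^{(K)}$. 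To handle this, I would enlarge the integration variable: treat $\Xi_t^j$ as determined by $\lambda_t^j$ together with the common statistic, arguing inductively via Theorem~\ref{thm:is-pbp} and Theorem~\ref{thm:is-cen} that the recursions propagate through $\Pi_t^{(K)}$ (resp. $\Theta_t^{(K)}$). With that in place, the map $\lambda^{-k}\mapsto u^{-k}$ is a function of $\Pi_t^{(K)}$ (resp. $\Theta_t^{(K)}$). Substituting into the kernel then shows (\ref{ext-m}) with $\Delta_t^{(K)}$ replaced by $\Pi_t^{(K)}$ or $\Theta_t^{(K)}$.
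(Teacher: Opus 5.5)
Your part (1) follows the route the paper indicates (its proof is only ``using the recursions''), and it is sound up to one correction. By (\ref{eq-nested-a-1}), at $t+1$ control $k$ receives not only $Y_{t+1}^k$ but also the newly shared $(Y_{t-T+1}^{-k},U_{t-T+1}^{-k})$. These are coordinates of $\Lambda_t^{-k}$ (for $T=1$, $U_t^{-k}$ is instead $\gamma_t^{-k}(\delta_t^{(K)},Y_t^{-k})$), and the filter conditions on them. So $\Xi_{t+1}^k$ is not a function of $(Y_{t+1}^k,u_t^k,\delta_t^{(K)},\xi_t^k)$ alone, whatever the notation of Theorem~\ref{thm:is-pbp} suggests. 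What you must compute is the joint predictive law of $(Y_{t+1}^k,Y_{t-T+1}^{-k},U_{t-T+1}^{-k})$ given $(i_t^k,u_t^k)$. This is the same integral of $Q_{t+1}^k S_{t+1}$ against $\xi_t^k(dx_t,d\lambda_t^{-k})$, with those coordinates of $\lambda_t^{-k}$ kept rather than integrated out. So the conclusion of (1) stands.

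The genuine gap is in (2), exactly at the step you identify and then defer. Everything rests on showing that the prescription $\lambda_t^{-k}\mapsto\{g_t^j(\Xi_t^j[\delta_t^{(K)},\lambda_t^j],\pi_t^{(K)},\lambda_t^j)\}_{j\neq k}$ depends on $\delta_t^{(K)}$ only through the common statistic. ``Arguing inductively via the recursions'' does not establish this. You give no induction hypothesis, and the recursion of Theorem~\ref{thm:is-pbp} for $\Xi_t^j$ is driven by $\gamma_{t-1}^{-j}(\delta_{t-1}^{(K)},\cdot)$, which is not a function of the time-$t$ statistic.

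For $\Theta$, the missing idea is a direct Bayes identity rather than an induction. Since $\Lambda_t^{(K)}=\{\Lambda_t^j,\Lambda_t^{-j}\}$, $\Xi_t^j[\delta_t^{(K)},\lambda_t^j]$ is the conditional of $\Theta_t^{\gamma^{(K)}}[\delta_t^{(K)}](dx_t,d\lambda_t^{(K)})$ given $\Lambda_t^j=\lambda_t^j$. Hence the prescription is a function of $\theta_t$, and your substitution then works.

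For $\Pi$, the step fails in general. By the same identity, $\Xi_t^j$ depends on the law of $(X_t,\Lambda_t^{(K)})$ given $\delta_t^{(K)}$, hence on $u_{t-T}^{(K)}$ (already for $T=1$). For $T\ge 2$ it also depends on the prescriptions $\gamma_s^{(K)}(\delta_s^{(K)},\cdot)$, $s=t-T+1,\ldots,t-1$, including control $k$'s own through $U_{t-T+1,t-1}^k\subseteq\Lambda_t^{-j}$. These prescriptions are encoded neither in $\pi_t^{(K)}$, which is strategy-independent by Theorem~\ref{thm:is-cen}(1), nor in $\xi_t^k$, which by Lemma~\ref{lem:ind} records only the realized actions of control $k$. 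This is the obstruction behind the Varaiya--Walrand counterexample for $T\ge 2$ recalled in the introduction. So the $\Pi$ version of (2) cannot come out of your argument without enlarging the conditioning.
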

\begin{proof} This is easy shown using the recursions.
\end{proof}

\subsection{Generalized   DP Equations, Private  Information States, Semi-Separated Strategies}
\label{sect:DP-VP}
In this section
we give \\
1) the general  necessary and sufficient conditions for PbP optimality using generalized  DP equations  for strategies  $\gamma^{(K)} \in {\cal U}_{1,n}^{(K)}$, and \\ 
2) the {\it structural property} that optimal strategies occur in the subset of semi-separated  strategies ${\cal U}_{1,n}^{(K),s-sep}\subseteq 
{\cal U}_{1,n}^{(K)}$.


%

\ \

\begin{definition}(Decentralized PbP Value Processes)\\
\label{def:ctg-nm}
Consider  PbP  payoff $J_{n} (\gamma^{k}, \gamma^{-k,o})$ of  strategy $\gamma^k \in {\cal U}_{1,n}^k$  for fixed   optimal  $\gamma^{-k}=\gamma^{-k,o}\in {\cal U}_{1,n}^{-k}$,  $\forall k \in {\mathbb Z}_+^K$.\\
Define the value process or  optimal cost-to-go ${\cal V}_{t}^{\gamma^{k,o},\gamma^{-k,o}}(\cdot): {\mathbb I}_t^{k}\rar [0,\infty)$, over  $\{t,t+1, \ldots, n\}$  of  strategy $\gamma^k \in {\cal U}_{1,n}^k$,  when the optimal strategy $\gamma^{k,o}\in {\cal U}_{1,t-1}^{k}$ is  used over $\{1,2, \ldots, t-1\}$, conditioned  on any  $I_t^k=\{\Delta_t^{(K)}, \Lambda_t^k\}=i_t^k=\{\delta_t^{(K)}, \lambda_t^k\}$   is   
\begin{align}
 & {\cal V}_{t}^{\gamma^{k,o},\gamma^{-k,o}}(i_t^k) \tri   \inf_{  \gamma^{k}\in {\cal U}_{t,n}^k}  {J}_{t,n}^{\gamma^{k},\gamma^{-k,o}}(i_t^k),   \;    \forall t,
  \label{opt-ctg-g}\\
&{J}_{t,n}^{\gamma^{k},\gamma^{-k,o}}(i_t^k) 
 \tri   {\mathbb E}^{^{\gamma^k, \gamma^{-k,o}}} \Big\{ \nonumber \\
 &\hso \sum_{j=t}^{n} \ell(j, X_j,\gamma_j^k(I_j^k),  \gamma_j^{-k,o}(I_j^{-k})) \Big|i_t^k \Big\}, \;   \forall k .    \label{opt-ctg-pbp-tc} 
\end{align}
\end{definition}

Theorem~\ref{thm:vp} gives necessary and sufficient conditions for PbP optimality.

\ \

\begin{theorem}(Decentralized generalized DP Equations-Necessary and Sufficient Conditions)\\
\label{thm:vp}
Consider  
  the value process of  Definition~\ref{def:ctg-nm}.\\
(1) {\it Necessary Conditions for PbP Optimality.}  For each $ k \in {\mathbb Z}_+^K$,  suppose a PbP optimal $\gamma^{(K),o} \in {\cal U}_{1,n}^{(K)}$ exist. \\
(1.1) 
${\cal V}_t^{\gamma^{k,o}, \gamma^{-k,o}}(i_t^k)=  {\cal V}_t^{ \gamma^{-k,o}}(i_t^k), \forall i_t^k,  \forall  \gamma^{k,o}  \in {\cal U}_{1,n}^k, \forall  t\in T_+^n $ (i.e., is independent  of $\gamma^{k,o}$) 
 and  satisfies   the   DP   equations,  (\ref{dp_1_nnn_1}),   (\ref{dp_1_na}),  for all $I_t^k=i_t^k$, 
\begin{align}
&{\cal V}_n^{\gamma^{-k,o}}(i_n^k)=\inf_{u_n^k \in {\mathbb U}_n^k}{\mathbb E}^{ \gamma^{-k,o}}   \Big\{   \ell(n, X_n,u_n^k,   \nonumber \\
&\gamma_n^{-k,o}(\delta_n^{(K)}, \Lambda_n^{-k}))   \Big| i_n^k, u_n^k  \Big\},\hst  \forall k \in {\mathbb Z}_+^K \label{dp_1_nnn}\\
&= \inf_{u_n^k \in {\mathbb U}_n^k}\int_{{\mathbb X}_n  \times {\mathbb L}_{n}^{-k} }     \ell(n, x_n,u_n^k,  \gamma_n^{-k,o}(\delta_n^{(K)}, \lambda_n^{-k})) \nonumber \\
&\hst . \xi_n^k[i_n^k] (dx_{n}, d\lambda_n^{-k}),      \label{dp_1_nnn_1}  \\
&{\cal V}_t^{\gamma^{-k,o}}(i_t^k) 
 = \inf_{u_t^k \in {\mathbb U}_t^k}  {\mathbb E}^{ \gamma^{-k,o}}   \Big\{    \ell(t, X_t,u_t^k,  \gamma_t^{-k,o}(\delta_t^{(K)}, \Lambda_t^{-k}))  \nonumber\\
 &+{\cal V}_{t+1}^{\gamma^{-k,o}}(I_{t+1}^k ) \Big| i_t^k,  u_t^k \Big\}, \; \forall t \in T_+^{n-1}, k \in {\mathbb Z}_+^K \label{dp_2_nm}   \\
 &=  \inf_{u_t^k \in {\mathbb U}_t^k}  \Big\{  \int_{{\mathbb X}_t \times {\mathbb L}_{t}^{-k}}    \ell(t, x_t,u_t^k,  \gamma_t^{-k,o}(\delta_t^{(K)}, \lambda_t^{-k}))  \nonumber \\
 &.  \xi_t^k [i_t^k](dx_{t}, d\lambda_{t}^{-k})  +  \int_{ {\mathbb Y}_{t+1}^k \times {\mathbb X}_{t, t+1}   \times {\mathbb L}_{t}^{-k}    } {\cal V}_{t+1}^{\gamma^{-k,o}}(i_{t}^k,  y_{t+1}^k, u_t^k,\nonumber\\
 & y_{t-T+1}^{-k},  u_{t-T+1}^{-k})   {\bf P}_{t+1}^{ \gamma^{-k,o}}(dy_{t+1}^k, d\lambda_{t}^{-k}, dx_{t}, dx_{t+1}\big|   i_t^k, u_t^k)\Big\} \label{dp_1_na}
\end{align}
where in  (\ref{dp_1_na}) 
 the   conditional  PM and $u_{t-T+1}^{-k}$ are
\begin{align}
&{\bf P}_{t+1}^{ \gamma^{-k,o}}( dy_{t+1}^k, d\lambda_t^{-k},dx_{t}, dx_{t+1} \big|i_t^k, u_t^k) \label{Mar_REC_1-nn-11}  \\
&={\bf P}_{t+1}^{ \gamma^{-k,o}}( dy_{t+1}^k, d\lambda_t^{-k},dx_{t}, dx_{t+1} \big| \xi_t^k, \delta_t^{(K)},  u_t^k) \label{Mar_REC_1-nn-aaa}\\
&= Q_{t+1}^k(dy_{t+1}^k\big|    x_{t+1}, u_t^k, \gamma_t^{-k,o}(\delta_t^{(K)}, \lambda_t^{-k}))\label{Mar_REC_1-nn} \\
&.{ S}_{t+1}(dx_{t+1}\big|   x_t, u_t^k,\gamma_t^{-k,o}(\delta_t^{(K)}, \lambda_t^{-k})) \xi_t^k[i_t^k](dx_{t}, d\lambda_{t}^{-k}), \nonumber \\
 &u_{t-T+1}^{-k}
 =\Big\{ \gamma_{t-T+1}^{j,o}(\delta_{t-T+1}^{(K)}, \lambda_{t-T+1}^{j})\Big\}_{j=1,j \neq k}^K, \label{dp_1_na-str1}
\end{align}
and the  argument  $\{\delta_{t-T+1}^{(K)},\lambda_{t-T+1}^{j}\}$    of  $\gamma_{t-T+1}^{j,o}(\cdot)$, 
  except  $y_{t-T+1}^{j}$,   is specified by $\delta_{t}^{(K)}$ (see Lemma~\ref{lemma:nested}.(2)).  
\\
(1.2) For each $k$, the infimum in the DP eqns occurs 
in the set of semi-separated strategies, $\gamma^{k,o} \in  {\cal U}_{1,n}^{k,s-sep}\subseteq {\cal U}_{1,n}^{k}$,   i.e., 
 $u_n^{k,o}=\gamma_n^{k,o}(\xi_n^k[i_n^k], \delta_n^{(K)})\in {\mathbb U}_n^k,   u_t^{k,o}=\gamma_t^{k,o}(\xi_t^k[i_t^k], \delta_t^{(K)}, \lambda_t^k)\in {\mathbb U}_t^k, \forall  t \in T_+^{n-1}$. \\
(2) {\it Verification-Sufficient  Conditions for PbP Optimality.} \\
(2.1) For each  $k \in {\mathbb Z}_+^K$, if the value process  ${\cal V}_t^{ \gamma^{-k,o}}(\cdot),  \forall t \in T_+^n$  satisfies the DP equation  of part (1),  then  (almost surely), 
\begin{align}
 {\cal V}_t^{\gamma^{-k,o}}(I_t^k)    \leq J_{t,n}^{\gamma^{k}, \gamma^{-k,o}}(I_t^k) ,  \;\forall  \gamma^k \in {\cal U}_{1,n}^k, \; \forall (t,k)  \label{in_1_g} 
\end{align}
and the resulting $\gamma^{k,o} \in {\cal U}_{1,n}^k, \; \forall k \in {\mathbb Z}_+^K$ is PbP optimal.\\
(2.2) Suppose for each $k \in {\mathbb Z}_{+}^K$, 
$\gamma^{k,o}(\cdot)$ is a strategy $  u_t^{k,o}=\gamma_t^{k,o}(\xi_t^k[i_t^k], \delta_t^{(K)}, \lambda_t^k)\in {\mathbb U}_t^k$,  i.e., $\gamma^{k,o} \in   {\cal U}_{1,n}^{k,s-sep}$, 
such that,  for all $\{\xi_t^k[i_t^k], \delta_t^{(K)}, \lambda_t^k\}$ achieves   the infimum in the DP eqns of part (1)  for $t=1, \ldots, n$. \\
Then $\gamma^{k,o} \in  {\cal U}_{1,n}^{k}$ is PbP optimal and ${\cal V}_t^{\gamma^{-k,o}}(I_t^k) = J_{t,n}^{\gamma^{k,o}, \gamma^{-k,o}}(I_t^k), \forall (t, k)$  (almost surely).
\end{theorem}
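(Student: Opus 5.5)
The plan is to fix $k$ and all other strategies at $\gamma^{-k,o}$, and recast the problem of strategy $\gamma^k$ as a centralized POMDP. Its hidden state is the extended process $\{X_t,\Lambda_t^{-k}\}$ and its observation process is $\{I_t^k\}$. By Lemma~\ref{lemma:nested} the information is nested, $I_t^k\subseteq I_{t+1}^k$, with increment $\{Y_{t+1}^k,U_t^k,Y_{t-T+1}^{-k},U_{t-T+1}^{-k}\}$. Moreover $U_{t-T+1}^{-k}$ is a known measurable function of $\delta_t^{(K)}$ and $Y_{t-T+1}^{-k}$ under $\gamma^{-k,o}$. Hence the problem is a standard perfect-recall stochastic control problem for a single decision maker, and the classical DP machinery of \cite{kumar-varayia:B1986,bertsekas-shreve1978} applies.

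\textbf{Part (1.1).} Proceed by backward induction on $t$. At $t=n$, Definition~\ref{def:ctg-nm} and the tower property give ${\cal V}_n$ as the infimum over measurable $\gamma_n^k$ of the conditional expectation. Since $u_n^k=\gamma_n^k(i_n^k)$ is fixed given $i_n^k$, the infimum can be interchanged with the conditioning: the infimum over strategies equals the pointwise infimum over actions $u_n^k\in{\mathbb U}_n^k$. This uses a measurable selection argument, which is available because $\ell$ is lower semicontinuous and bounded below and the spaces are Polish. Lemma~\ref{lemma:payoff-pbp} then yields (\ref{dp_1_nnn_1}) with $\xi_n^k[i_n^k]$.

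For the inductive step, split $J_{t,n}$ into the stage-$t$ cost plus $\mathbb{E}\{J_{t+1,n}\mid i_t^k\}$, and use the nested property to write the inner expectation conditionally on $I_{t+1}^k$. The infimum over $\gamma_{t+1,n}^k$ is taken inside, which again needs the interchange/selection argument together with the induction hypothesis. The remaining infimum over $\gamma_t^k$ then reduces to a pointwise infimum over $u_t^k$.

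The conditional law of $(Y_{t+1}^k,\Lambda_t^{-k},X_t,X_{t+1})$ given $(i_t^k,u_t^k)$ is then computed from (\ref{i-d1}), (\ref{i-d2}) and the definition of $\xi_t^k$, which gives (\ref{Mar_REC_1-nn}). The components $y_{t-T+1}^{-k}$ are coordinates of $\lambda_t^{-k}$, and $u_{t-T+1}^{-k}$ is given by (\ref{dp_1_na-str1}). Independence from $\gamma^{k,o}$ follows from Lemma~\ref{lem:ind}: $\xi_t^k$ and the transition kernel depend only on the realized actions $u^k$, not on the past strategy of $k$. Consequently ${\cal V}_t$ depends only on $(i_t^k,\gamma^{-k,o})$.

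\textbf{Part (1.2).} Inspecting the right-hand sides, the integrand depends on $i_t^k$ only through $(\xi_t^k[i_t^k],\delta_t^{(K)},\lambda_t^k)$. The continuation ${\cal V}_{t+1}$ is evaluated at $i_{t+1}^k$, which is built from $i_t^k$. Using Theorem~\ref{thm:is-pbp} and Lemma~\ref{lem:markov}, I would show inductively that ${\cal V}_{t+1}$ is a function of $(\xi_{t+1}^k,\delta_{t+1}^{(K)},\lambda_{t+1}^k)$. The update of $\xi$ uses only $\xi_t^k$, $y_{t+1}^k$, $u_t^k$ and $\gamma_t^{-k,o}(\delta_t^{(K)},\cdot)$, while $\delta_{t+1}$ and $\lambda_{t+1}$ are obtained from $(\delta_t,\lambda_t)$ and the new data. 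Hence the minimized expression is a function of $(\xi_t^k,\delta_t^{(K)},\lambda_t^k,u_t^k)$, and the minimizer can be chosen measurably as a function of these, i.e.\ it is semi-separated. At $t=n$ no continuation term is present, and $\lambda_n^k$ enters only through $\xi_n^k$.

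\textbf{Part (2).} For (2.1), use backward induction. For any $\gamma^k$, the DP inequality with $u_t^k=\gamma_t^k(I_t^k)$ gives ${\cal V}_t\le \ell$-term $+\mathbb{E}\{{\cal V}_{t+1}\mid I_t^k\}$. Iterating and using the induction hypothesis ${\cal V}_{t+1}\le J_{t+1,n}$ yields (\ref{in_1_g}). Taking $t=1$ and expectations gives (\ref{pbp}) for each $k$. For (2.2), the same chain holds with equalities when the minimizers are used, so ${\cal V}_t=J_{t,n}^{\gamma^{k,o},\gamma^{-k,o}}$ almost surely.

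\textbf{Main obstacle.} The step I expect to be hardest is the rigorous interchange of infimum and conditional expectation, since it requires measurability of ${\cal V}_{t+1}$ and a measurable $\epsilon$-optimal selection. I would first try to handle it via lower semicontinuity and a standard measurable selection theorem as in \cite{bertsekas-shreve1978}. If that is too delicate, I would fall back on $\epsilon$-optimal universally measurable selectors.
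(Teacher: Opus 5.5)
Your proposal is correct and follows the same route as the paper, whose proof only says that the result is a variation of the centralized POMDP argument of \cite{kumar-varayia:B1986}: with $\gamma^{-k,o}$ fixed, you treat agent $k$'s problem as a perfect-recall centralized POMDP with hidden state $(X_t,\Lambda_t^{-k})$ and observations $I_t^k$, run backward induction for (1.1)--(1.2), and use the standard comparison/verification argument for (2.1)--(2.2). Your version is more explicit than the paper's, in particular in flagging the measurability and $\epsilon$-optimal selection needed to interchange the infimum with the conditional expectation at the inductive step, which the paper does not address.
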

\begin{proof} The proof  is a  variation  of the analogous proof  of centralized POMDPs \cite{kumar-varayia:B1986}.
\end{proof}

 
 \ \
 
The decentralized  DP equations 
of  Theorem~\ref{thm:vp} did not  appeared in the past literature, i.e.,  \cite{sandell-athans1974,yoshikawa1975,varaiya-walrand1978,nayyar-mahajan-teneketzis2011,nayyar-mahajan-teneketzis2013,nayyar-teneketzis2019}, because previous studies considered a single cost-to-go conditioned on  the shared information  $\{\Delta_t^{(K)}|t \in T_+^n\}$.

Corollary~\ref{cor:vp-mn} is a simplification 
of Theorem~\ref{thm:vp}.

\ \

\begin{corollary}(Decentralized DP Equations  with Private Information  State and Semi-Separated Strategies)\\
\label{cor:vp-mn}
Consider Theorem~\ref{thm:vp}. 
 The identity holds, 
\begin{align}
 {\cal V}_{t}^{\gamma^{-k,o}}(i_t^k) =V_t^{\gamma^{-k,o}}(\xi_t^k,\delta_t^{(K)},\lambda_t^k ), \;  \forall t \in T_+^n, \forall k \in {\mathbb Z}_+^K    \nonumber    
\end{align}
where $V_t^{\gamma^{-k,o}}(\xi_t^k,\delta_t^{(K)},\lambda_t^k )$ satisfies the 
  DP   equations: 
\begin{align}
&{ V}_n^{\gamma^{-k,o}}(\xi_n^k,\delta_n^{(K)}, \lambda_n^k  )=\inf_{u_n^k \in {\mathbb U}_n^k}{\mathbb E}^{ \gamma^{-k,o}}   \Big\{ \label{dp_1_nnn-mn}   \\
& \ell(n, X_n,u_n^k,  \gamma_n^{-k,o}(\delta_n^{(K)}, \Lambda_n^{-k}))   \Big| \xi_n^k,\delta_n^{(K)}, \lambda_n^k  , u_n^k \Big\}, \forall k \in {\mathbb Z}_+^K \nonumber \\
&= \inf_{u_n^k \in {\mathbb U}_n^k}\int_{{\mathbb X}_n  \times {\mathbb L}_{n}^{-k} }     \ell(n, x_n,u_n^k,  \gamma_n^{-k,o}(\delta_n^{(K)}, \lambda_n^{-k})) \label{dp_1_nnn_1-mn}\\
&\hst . \xi_{n}^{k} (dx_{n}, d\lambda_n^{-k})[i_n^k],      \nonumber   \\
&{ V}_t^{\gamma^{-k,o}}(\xi_t^k,\delta_t^{(K)},\lambda_t^k  )  = \inf_{u_t^k \in {\mathbb U}_t^k}  {\mathbb E}^{ \gamma^{-k,o}}   \Big\{ \label{dp_2_mn}   \\
&   \ell(t, X_t,u_t^k,  \gamma_t^{-k,o}(\delta_t^{(K)}, \Lambda_t^{-k}))  +{ V}_{t+1}^{\gamma^{-k,o}}(\Xi_{t+1}^k, \Delta_{t+1}^{(K)}, \Lambda_{t+1}^k  )\nonumber\\
 & \Big| \xi_t^k,\delta_t^{(K)}, \lambda_t^k , U_t^k \Big\}, \hso  \forall t \in T_+^{n-1},  \forall k \in {\mathbb Z}_+^K \label{dp_2_mn}   \\
 &=  \inf_{u_t^k \in {\mathbb U}_t^k}  \Big\{  \int_{{\mathbb X}_t \times {\mathbb L}_{t}^{-k}}    \ell(t, x_t,u_t^k,  \gamma_t^{-k,o}(\delta_t^{(K)}, \lambda_t^{-k}))  \nonumber \\
 &.  \xi_{t}^k (dx_{t}, d\lambda_{t}^{-k})[i_t^k]  +  \int_{ {\mathbb Y}_{t+1}^k \times {\mathbb X}_{t, t+1}   \times {\mathbb L}_{t}^{-k}    }  {V}_{t+1}^{\gamma^{-k,o}}({\bf T}_{t+1}^{k}[y_{t+1}^{k},  \nonumber  \\
 &u_t^k, \gamma_t^{-k}(\delta_t^{(K)}, \cdot),    \xi_t^k(\cdot)], \delta_t^{(K)}, \lambda_t^k ,   y_{t+1}^k, u_t^k,    y_{t-T+1}^{-k}, u_{t-T+1}^{-k}) \nonumber  \\
 &. {\bf P}_{t+1}^{ \gamma^{-k,o}}(dy_{t+1}^k, d\lambda_{t}^{-k}, dx_{t}, dx_{t+1}\Big|   \xi_t^k, \delta_t^{(K)}, \lambda_t^k, u_t^k) \Big\}  \label{dp_1_na_m}
\end{align}
where   
 the PM of the last RHS term is
%
given in  Theorem~\ref{thm:vp}).\\
\end{corollary}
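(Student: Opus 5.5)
The plan is a backward induction on $t=n,n-1,\ldots,1$ that rewrites the generalized DP equations of Theorem~\ref{thm:vp} in terms of the triple $(\xi_t^k,\delta_t^{(K)},\lambda_t^k)$. By Lemma~\ref{lem:ind}, the private \^a posteriori PM $\xi_t^k=\Xi_t^k[i_t^k]$ depends only on the realization $i_t^k$ and the fixed $\gamma^{-k,o}$, not on $\gamma^k$. So the map $i_t^k\mapsto(\xi_t^k,\delta_t^{(K)},\lambda_t^k)$ is a well-defined measurable function of the data alone, and it is legitimate to seek $V_t^{\gamma^{-k,o}}$ with ${\cal V}_t^{\gamma^{-k,o}}=V_t^{\gamma^{-k,o}}\circ(\Xi_t^k,\Delta_t^{(K)},\Lambda_t^k)$.

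\emph{Terminal step $t=n$.} In (\ref{dp_1_nnn_1}) the integrand $\ell(n,x_n,u_n^k,\gamma_n^{-k,o}(\delta_n^{(K)},\lambda_n^{-k}))$ depends on $i_n^k$ only through $\delta_n^{(K)}$, and the measure is $\xi_n^k$. Hence the right-hand side is a function of $(\xi_n^k,\delta_n^{(K)})$. Defining $V_n^{\gamma^{-k,o}}$ by (\ref{dp_1_nnn_1-mn}) gives the identity at $t=n$. The conditional-expectation form (\ref{dp_1_nnn-mn}) follows by re-conditioning, as in Lemma~\ref{lemma:payoff-pbp}.

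\emph{Inductive step.} Assume ${\cal V}_{t+1}^{\gamma^{-k,o}}(i_{t+1}^k)=V_{t+1}^{\gamma^{-k,o}}(\xi_{t+1}^k,\delta_{t+1}^{(K)},\lambda_{t+1}^k)$ and substitute this into (\ref{dp_1_na}).
\begin{itemize}
\item By Theorem~\ref{thm:is-pbp}, $\xi_{t+1}^k={\bf T}_{t+1}^k[y_{t+1}^k,u_t^k,\gamma_t^{-k,o}(\delta_t^{(K)},\cdot),\xi_t^k]$, a function of $(\xi_t^k,\delta_t^{(K)},y_{t+1}^k,u_t^k)$.
\item By Lemma~\ref{lemma:nested}(1), $\delta_{t+1}^{(K)}$ and $\lambda_{t+1}^k$ are functions of $(\delta_t^{(K)},\lambda_t^k,y_{t+1}^k,u_t^k,y_{t-T+1}^{-k},u_{t-T+1}^{-k})$.
\item By Lemma~\ref{lemma:nested}(2), $u_{t-T+1}^{-k}$ is given by (\ref{dp_1_na-str1}) as a function of $\delta_t^{(K)}$ and $y_{t-T+1}^{-k}$, where $y_{t-T+1}^{-k}$ is a component of $\lambda_t^{-k}$.
\end{itemize}
So the integrand of the second term is exactly the argument of $V_{t+1}^{\gamma^{-k,o}}$ displayed in (\ref{dp_1_na_m}). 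Next, by (\ref{Mar_REC_1-nn-aaa})--(\ref{Mar_REC_1-nn}), the integrating measure depends on $i_t^k$ only through $(\xi_t^k,\delta_t^{(K)})$, and this is where Lemma~\ref{lem:markov}(1) enters. The running-cost term depends only on $(\xi_t^k,\delta_t^{(K)})$, as at $t=n$. Hence the whole expression inside the infimum is a function of $(\xi_t^k,\delta_t^{(K)},\lambda_t^k,u_t^k)$. Defining $V_t^{\gamma^{-k,o}}$ by (\ref{dp_1_na_m}) gives the identity at $t$, and (\ref{dp_2_mn}) follows by writing the integral as a conditional expectation.

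\emph{Main obstacle.} The delicate point is the second bullet above: the shared increment $y_{t-T+1}^{-k},u_{t-T+1}^{-k}$ entering $\delta_{t+1}^{(K)}$ must be expressed through variables integrated against ${\bf P}_{t+1}^{\gamma^{-k,o}}(\cdot|\xi_t^k,\delta_t^{(K)},u_t^k)$, with no hidden dependence on $\gamma^k$. The plan is to verify that $\lambda_t^{-k}$ contains $y_{t-T+1}^{-k}$ (for $T\ge1$) and to apply Lemma~\ref{lemma:nested}(2) so that $u_{t-T+1}^{-k}$ is determined by $\delta_t^{(K)}$ together with $y_{t-T+1}^{-k}$. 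For $T=1$ the increment is $y_t^{-k}=\lambda_t^{-k}$ itself, and $u_t^{-k}=\gamma_t^{-k,o}(\delta_t^{(K)},\lambda_t^{-k})$.

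A secondary technical issue is measurability of $V_t^{\gamma^{-k,o}}$ in $\xi_t^k$. This would be handled by the lower semicontinuity and boundedness from below of $\ell$, together with Assumptions~\ref{ass-1}, which make ${\bf T}_{t+1}^k$ measurable in its arguments. As in the proof of Theorem~\ref{thm:vp}, measurable-selection arguments for the infimum are taken as available.
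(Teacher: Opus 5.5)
Your proposal is correct and takes the same route as the paper, which only says the corollary is a consequence of Theorem~\ref{thm:vp}. Your backward induction, which substitutes the recursion of Theorem~\ref{thm:is-pbp} and the nested structure of Lemma~\ref{lemma:nested} into (\ref{dp_1_nnn_1}) and (\ref{dp_1_na}), is exactly the unpacking of that remark; the identity itself is nearly immediate because $(\delta_t^{(K)},\lambda_t^k)=i_t^k$ already appears among the arguments, so the real content is the form of the recursion, which you verify.
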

\begin{proof} This is a  consequence of Theorem~\ref{thm:vp}.
\end{proof}

\ \

\begin{remark} (On Corollary~\ref{cor:vp-mn})\\
\indent  In Corollary~\ref{cor:vp-mn},  
 $\gamma^{k,o}\in {\cal U}_{1,n}^{k,s-sep}$,  but  expansive w.r.t.  $\Delta_t^{(K)},\forall  t$.  The DP equations  can be   further simplified,  if instead of $\big\{\Delta_t^{(K)} | t  \in T_+^{n-1}\big\}$  we use  information states $\big\{ \Pi_{t}^{(K)}[\Delta_t^{(K)}]  | t  \in T_+^{n-1}\big\}$  or  $\big\{ \Theta_{t}^{\gamma^{(K)}}[\Delta_t^{(K)}]  | t  \in T_+^{n-1}\big\}$, and restrict  $\gamma^k \in {\cal U}_{1,n}^{k}$ to the   separated strategies ( Definition~\ref{def:is-ss})  $\gamma^k\in {\cal U}_{1,n}^{k,sep}$.
We address this  in  the next section. 
\end{remark}

\subsection{DP Equations, Private and Centralized  Information States, Separated  and Information state  Strategies}
\label{sect:DP-IS-SEP}
Theorem~\ref{thm:vp-mn-sep} and Theorem~\ref{thm:vp-mn-is},   settle the long standing open problem, first discussed by Witsenhausen in \cite[Assertion 8, pp.1562]{witsenhausen1971}. We make use of the  centralized \^a posteriori PMs, Theorem~\ref{thm:is-cen},  to derive    2   additional simplified  DP equations,  using separable strategies    ${\cal U}_{1,n}^{k,sep}$ and information state strategies ${\cal U}_{1,n}^{k,is}$,  which are not expansive w.r.t.  $\Delta_t^{(K)},\forall  t$.

\ \

\begin{theorem}(DP Eqns with Private and Centralized  Information States-Necessary Conditions and Verification)\\
\label{thm:vp-mn-sep}
Suppose   Assumptions~\ref{ass-1} hold.  
 Restrict  the strategies to  separated,   $ {\cal U}_{1,n}^{k,sep}\subseteq {\cal U}_{1,n}^{k,s-sep}\subseteq{\cal U}_{1,n}^{k}$,   $\gamma_t^k=\gamma_t^{k}(\Xi_t^k[I_t^k], \Pi_t^{(K)}[\Delta_{t}^{(K)}], \Lambda_t^k), \forall t, k$. 

 (1) For each $ k \in {\mathbb Z}_+^K$ suppose a PbP optimal $\gamma^{k,o} \in {\cal U}_{t,n}^{k,sep}$ exists. The identity holds, $ \forall (\xi_t^k,\delta_t^{(K)},\pi_t^{{(K)}},\lambda_t^k )$, 
\begin{align}
  V_t^{\gamma^{-k,o}}(\xi_t^k,\delta_t^{(K)},\lambda_t^k )={ V}_t^{\gamma^{-k,o},sep}(\xi_t^k,\pi_t^{(K)},\lambda_t^k  ), \;  \forall (t,k)    \nonumber    
\end{align}
 where $V_t^{ \gamma^{-k,o},sep}(\cdot)$  
   satisfies    the following  DP   equations.  
\begin{align}
&{ V}_n^{\gamma^{-k,o},sep}(\xi_n^k,\pi_n^{(K)}, \lambda_n^k  )=\inf_{u_n^k \in {\mathbb U}_n^k}{\mathbb E}^{ \gamma^{-k,o}}   \Big\{ \ell(n, X_n,u_n^k,   \nonumber  \\
&  \gamma_n^{-k,o}(\Xi_n^{-k}, \pi_n^{(K)}, \Lambda_n^{-k}))   \Big| \xi_n^k,\pi_n^{(K)}, \lambda_n^k  , u_n^k \Big\}, \forall k \label{dp_1_nnn-mn-is-s} \\
&= \inf_{u_n^k \in {\mathbb U}_n^k}\int_{{\mathbb X}_n  \times {\mathbb L}_{n}^{-k} }     \ell(n, x_n,u_n^k,  \gamma_n^{-k,o}(\xi_n^{-k}, \pi_n^{(K)}, \lambda_n^{-k}))\nonumber \\
&\hst . \xi_{n}^{k}[i_n^k] (dx_{n}, d\lambda_n^{-k}),        \label{dp_1_nnn_1-mn-is-s} \\
&{ V}_t^{\gamma^{-k,o},sep}(\xi_t^k,\pi_t^{(K)},\lambda_t^k  )   \hst \hso     \forall t \in T_+^{n-1}, \; \forall k  \label{dp_2_mn-is-s}\\
& = \inf_{u_t^k \in {\mathbb U}_t^k}  {\mathbb E}^{ \gamma^{-k,o}}   \Big\{ \ell(t, X_t,u_t^k,  \gamma_t^{-k,o}(\Xi_t^{-k}, \pi_t^{(K)}, \Lambda_t^{-k}))  \nonumber \\
&+{ V}_{t+1}^{\gamma^{-k,o}, sep}(\Xi_{t+1}^k, \Pi_{t+1}^{(K)}, \Lambda_{t+1}^k )\Big| \xi_t^k,\pi_t^{(K)}, \lambda_t^k , u_t^k, u_{t-T}^{(K)} \Big\}\nonumber    \\
 &=  \inf_{u_t^k \in {\mathbb U}_t^k}  \Big\{  \int_{{\mathbb X}_t \times {\mathbb L}_{t}^{-k}}    \ell(t, x_t,u_t^k,  \gamma_t^{-k,o}(\xi_t^{-k}, \pi_t^{(K)}, \lambda_t^{-k}))  \nonumber \\
 &.  \xi_{t}^k[i_t^k] (dx_{t}, d\lambda_{t}^{-k})  +  \int_{ {\mathbb Y}_{t+1}^k \times {\mathbb X}_{t, t+1}   \times {\mathbb L}_{t}^{-k}    }   {V}_{t+1}^{\gamma^{-k,o},sep}({\bf T}_{t+1}^{k}\big(\nonumber  \\
 &y_{t+1}^{k},  u_t^k, \gamma_t^{-k,o}(\xi_t^{-k}, \pi_t^{(K)}, \cdot),    \xi_t^k(\cdot)\big), {\bf T}_{t+1}^{(K)}\big(y_{t-T+1}^{k},    y_{t-T+1}^{-k}, \nonumber  \\
 &   u_{t-T}^{(K)}, \theta_{t}^{(K)} (\cdot)\big),     y_{t+1}^k, u_t^k,  y_{t-T+2,t}^{k},  u_{t-T+2,t-1}^{k}) \nonumber  \\
&. {\bf P}_{t+1}^{ \gamma^{-k,o}}(dy_{t+1}^{k}, d\lambda_{t}^{-k}, dx_{t}, dx_{t+1}     \Big|   \xi_t^k, \pi_t^{(K)}, \lambda_t^k , u_t^k, u_{t-T}^{(K)}) \Big\}  \label{dp_1_na_m-is-s}
\end{align}
where  
 the   conditional  PM of the last right hand side term is given in  Theorem~\ref{thm:vp} with $\delta_t^{(K)}$ replaced by $\pi_t^{(K)}$.

(2) For each $ k \in {\mathbb Z}_+^K$ suppose   the DP equations (\ref{dp_1_nnn-mn-is-s})-(\ref{dp_1_na_m-is-s}) hold. Then 
the following inequalities hold, 
\begin{align}
&{V}_n^{\gamma^{-k,o},sep}(\Xi_n^k,\Pi_n^{(K)}, \Lambda_n^k ) \label{in_1-a-is-s} \\ 
&\leq J_{n,n}^{\gamma^{k}, \gamma^{-k,o}}(I_n^k) , \;   \forall \gamma^k \in {\cal U}_{1,n}^{k,sep},  \forall k \in {\mathbb Z}_+^K, \nonumber  \\
&{ V}_t^{\gamma^{-k,o},sep}(\Xi_t^k,\Pi_t^{(K)},\Lambda_t^k )  \label{in_1-is-s}\\
& \leq J_{t,n}^{\gamma^{k}, \gamma^{-k,o}}(I_t^k) , \; \forall t \in T_+^{n-1}, \; \forall \gamma^k \in {\cal U}_{1,n}^{k,sep}, \; \forall k \in {\mathbb Z}_+^K. \nonumber 
\end{align}
where ${J}_{t,n}^{\gamma^{k},\gamma^{-k,o}}(I_t^k)$ is defined  by  (\ref{opt-ctg-pbp-tc})  w.r.t. $\{\gamma^{k},\gamma^{-k,o}\} \in {\cal U}_{1,n}^{k,sep}\times {\cal U}_{1,n}^{-k,sep}$.

(3)  Given the optimal separated strategies  $\gamma^{-k}=\gamma^{-k,o}\in {\cal U}_{1,n}^{-k, sep}$,   let $\gamma^{k,o} \in {\cal U}_{1,n}^{k,sep}$ be a separated strategy such that for all $\{\xi_t^k,\pi_t^{(K)},\lambda_{t}^k\}$, strategy $\gamma_t^{k,o}(\xi_t^k,\pi_t^{(K)},\lambda_{t}^k)$ achieves the infimum in DP eqns   (\ref{dp_1_nnn-mn-is-s})-(\ref{dp_1_na_m-is-s}) .   Then $\gamma^{k,o}(\xi_t^k,\pi_t^{(K)},\lambda_t^k )$ is optimal and $\forall k \in {\mathbb Z}_+^K$,
\begin{align}
&{ V}_n^{\gamma^{-k,o}}(\Xi_n^k,\Pi_t^{(K)}, \Lambda_t^k)  = J_{n,n}^{\gamma^{k,o}, \gamma^{-k,o}}(I_n^k)-a.s., \label{in_2-a}\\
&{V}_t^{\gamma^{-k,o}}(\Xi_t^k,\Pi_t^{(K)},\Lambda_{t}^k) \nonumber \\
 & = J_{t,n}^{\gamma^{k,o}, \gamma^{-k,o}}(I_t^k)-a.s., \forall t \in T_+^{n-1}. \label{in_2}
\end{align}
where ${J}_{t,n}^{\gamma^{k,o},\gamma^{-k,o}}(I_t^k)$ is defined  as in (2). 

\end{theorem}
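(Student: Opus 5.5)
Having restricted all strategies to separated ones, I would obtain the theorem from Corollary~\ref{cor:vp-mn} by a backward induction. The induction shows that the value function $V_t^{\gamma^{-k,o}}(\xi_t^k,\delta_t^{(K)},\lambda_t^k)$ depends on the common data $\delta_t^{(K)}$ only through the centralized information state $\pi_t^{(K)}=\Pi_t^{(K)}[\delta_t^{(K)}]$ of Theorem~\ref{thm:is-cen}.(1). Two facts make this possible. First, under separation the other controls $\gamma_t^{-k,o}(\Xi_t^{-k},\Pi_t^{(K)},\Lambda_t^{-k})$ use $\delta_t^{(K)}$ only through $\pi_t^{(K)}$. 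Second, by Theorem~\ref{thm:is-cen}.(1), $\Pi_t^{(K)}$ is strategy-independent and updates via ${\bf T}_{t+1}^{(K)}$ using only the newly shared data $(y_{t-T+1}^{(K)},u_{t-T}^{(K)})$.

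\emph{Step 1: the terminal stage.} Rewrite (\ref{dp_1_nnn_1-mn}) with $\gamma_n^{-k,o}$ evaluated at $(\xi_n^{-k},\pi_n^{(K)},\lambda_n^{-k})$. The integrand and the measure $\xi_n^k$ then depend on $\delta_n^{(K)}$ only via $\pi_n^{(K)}$. Here the private state $\xi_n^k$ is already a given argument, and the other agents' private states $\xi_n^{-k}$ are determined by $(\delta_n^{(K)},\lambda_n^{-k})$. I would absorb them into the extended hidden variable integrated against $\xi_n^k$, so that $V_n^{\gamma^{-k,o},sep}$ is well defined, which gives (\ref{dp_1_nnn-mn-is-s})--(\ref{dp_1_nnn_1-mn-is-s}).

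\emph{Step 2: the inductive step.} Assume $V_{t+1}=V_{t+1}^{sep}(\xi_{t+1}^k,\pi_{t+1}^{(K)},\lambda_{t+1}^k)$ and substitute this into (\ref{dp_1_na_m}). Then check the three ingredients in turn.
\begin{enumerate}
\item The private update ${\bf T}_{t+1}^k$ of Theorem~\ref{thm:is-pbp} involves $\delta_t^{(K)}$ only through $\gamma_t^{-k,o}(\delta_t^{(K)},\cdot)$, which is now a function of $\pi_t^{(K)}$.
\item $\Pi_{t+1}^{(K)}={\bf T}_{t+1}^{(K)}(y_{t-T+1}^{(K)},u_{t-T}^{(K)},\pi_t^{(K)})$ by (\ref{apost_c}). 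Here $y_{t-T+1}^k$ and $u_{t-T}^{k}$ lie in $\lambda_t^k$ or were already released, and $y_{t-T+1}^{-k}$ is integrated against the kernel. The components $u_{t-T+1}^{-k}$ are given by Lemma~\ref{lemma:nested}.(2) and (\ref{dp_1_na-str1}) as separated strategies evaluated at earlier states.
\item The transition measure (\ref{Mar_REC_1-nn}) depends on $\delta_t^{(K)}$ only through $\gamma_t^{-k,o}$.
\end{enumerate}
The conditional Markov property of Lemma~\ref{lem:markov}.(2) justifies that conditioning on $(\xi_t^k,\pi_t^{(K)},\lambda_t^k,u_t^k,u_{t-T}^{(K)})$ is equivalent to conditioning on $(\xi_t^k,\delta_t^{(K)},\lambda_t^k,u_t^k)$. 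The right-hand side of (\ref{dp_1_na_m}) is therefore a function of $(\xi_t^k,\pi_t^{(K)},\lambda_t^k)$, and so is its infimum over $u_t^k$. This yields (\ref{dp_2_mn-is-s})--(\ref{dp_1_na_m-is-s}) and the claimed identity.

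\emph{Step 3: parts (2) and (3).} These follow the verification argument of Theorem~\ref{thm:vp}.(2) by backward induction. At time $n$, for any separated $\gamma^k$, $V_n^{sep}$ is an infimum over actions of the conditional expected cost, so it is bounded above by the cost of the action $\gamma_n^k(\Xi_n^k,\Pi_n^{(K)},\Lambda_n^k)$. By Lemma~\ref{lemma:payoff-pbp} that cost equals $J_{n,n}$. Assuming (\ref{in_1-is-s}) at $t+1$, insert $u_t^k=\gamma_t^k(\cdot)$ into (\ref{dp_2_mn-is-s}) and use the tower property over $I_t^k\subseteq I_{t+1}^k$ (Lemma~\ref{lemma:nested}) together with the one-step identity for $J_{t,n}$. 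If $\gamma^{k,o}$ attains every infimum, all inequalities become equalities a.s., which gives (\ref{in_2-a})--(\ref{in_2}). At $t=1$, taking expectations gives PbP optimality within ${\cal U}_{1,n}^{k,sep}$.

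\textbf{Main obstacle.} The hard part is Step 2's claim that the expectation depends on $\delta_t^{(K)}$ only via $\pi_t^{(K)}$. The joint law of $(X_t,\Lambda_t^{-k})$ given $i_t^k$ is $\xi_t^k$, which legitimately depends on all of $\delta_t^{(K)}$, and so do the other agents' states $\Xi_t^{-k}$. I would handle this by treating $\xi_t^k$ as a free argument of $V^{sep}$, exactly as in Corollary~\ref{cor:vp-mn}, and by checking that no residual explicit $\delta_t^{(K)}$ remains once the strategies are separated. If this fails in general, I would fall back on the $\Theta_t^{\gamma^{(K)}}$ version of the common state from Theorem~\ref{thm:is-cen}.(2).
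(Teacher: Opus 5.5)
Your backward induction from Corollary~\ref{cor:vp-mn} is the natural route; the paper's own proof is only the remark that it is ``similar to Theorem~\ref{thm:is-pbp}.'' However, the step that carries all the weight does not go through.

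\textbf{The main gap (Steps 1 and 2).} In Step~2(1) you assert that, once strategies are separated, $\gamma_t^{-k,o}(\delta_t^{(K)},\cdot)$ depends on $\delta_t^{(K)}$ only through $\pi_t^{(K)}$. It does not. For $j\neq k$, the cost, ${\bf T}_{t+1}^k$ and the kernel (\ref{Mar_REC_1-nn}) integrate the map $\lambda_t^j\mapsto\gamma_t^{j,o}(\Xi_t^j[\delta_t^{(K)},\lambda_t^j],\pi_t^{(K)},\lambda_t^j)$ against $\xi_t^k(dx_t,d\lambda_t^{-k})$. That map contains the function $\lambda_t^j\mapsto\Xi_t^j[\delta_t^{(K)},\lambda_t^j]$, and $\pi_t^{(K)}={\bf P}(dx_{t-T}\mid\delta_t^{(K)})$ does not determine it. To compute it you must propagate through the window $t-T+1,\dots,t-1$, which needs:
\begin{itemize}
\item the actions $u_{t-T}^{(K)}$;
\item the partial functions $\lambda_s^i\mapsto\gamma_s^{i,o}(\delta_s^{(K)},\lambda_s^i)$, $s=t-T+1,\dots,t-1$.
\end{itemize}
Under separation these partial functions involve $\Pi_s^{(K)}[\delta_s^{(K)}]$ and $\Xi_s^i[\delta_s^{(K)},\cdot]$ for $s<t$, and none of these is a function of $\pi_t^{(K)}$. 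For example, with $T=2$ the law of $U_{t-1}^i$ enters $\Xi_t^j$, and $U_{t-1}^i$ depends on ${\bf P}(dx_{t-3}\mid\delta_{t-1}^{(K)})$ and on the released $y_{t-2}^i,u_{t-2}^i$.

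Your Step~1 ``absorption'' has the same defect. Pushing $\xi_n^k$ forward through $\lambda_n^{-k}\mapsto\Xi_n^{-k}[\delta_n^{(K)},\lambda_n^{-k}]$ gives a measure that again depends on all of $\delta_n^{(K)}$. The dependence is moved, not removed. So for generic separated $\gamma^{-k,o}$, even $V_n$ is not a function of $(\xi_n^k,\pi_n^{(K)},\lambda_n^k)$.

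\textbf{A second leak (Step 2(2)).} $u_{t-T}^k$ is not in $\lambda_t^k=\{y^k_{t-T+1,t},u^k_{t-T+1,t-1}\}$; it belongs to $\delta_t^{(K)}$. Forming $\Pi_{t+1}^{(K)}$ from $\pi_t^{(K)}$ therefore needs $u_{t-T}^{(K)}$, which is not an argument of $V_t^{\gamma^{-k,o},sep}$. Indeed, the theorem's own right-hand side conditions on $u_{t-T}^{(K)}$ while its left-hand side does not.

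\textbf{Why the cited lemma does not help.} Lemma~\ref{lem:markov}.(2) does not close either hole. It asserts exactly the sufficiency you need, and the recursion (\ref{apost_1a}) behind it contains the same $\gamma_t^{-k}(\delta_t^{(K)},\cdot)$.

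\textbf{What would close the induction.} You must enlarge the common state, in one of two ways.
\begin{itemize}
\item Carry $u_{t-T}^{(K)}$ and the window's partial functions alongside $\pi_t^{(K)}$, as in the common-information approach. This same window effect is what makes ${\bf P}(dx_{t-T}\mid\Delta_t^{(K)})$ insufficient for $T\geq 2$ in the Varaiya--Walrand counterexample.
\item Use $\Theta_t^{\gamma^{(K)}}$. From it, $\Xi_t^j[\delta_t^{(K)},\lambda_t^j]$ is obtained by conditioning on $\Lambda_t^j=\lambda_t^j$. This is your stated fallback, but it proves Theorem~\ref{thm:vp-mn-is}, not the $\Pi$-version stated here.
\end{itemize}
Your Step~3 verification argument is fine once you have a value function with well-defined arguments.
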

\begin{proof} Similar to Theorem~\ref{thm:is-pbp}.
\end{proof} 

\ \

In Theorem~\ref{thm:vp-mn-is}, we use $\big\{ \Theta_{t}^{\gamma^{(K)}}[\Delta_t^{(K)}]  | t  \in T_+^{n-1}\big\}$ instead of  $\big\{ \Pi_{t}^{(K)}[\Delta_t^{(K)}]  | t  \in T_+^{n-1}\big\}$,  
and   we show
the simplification, $ {\cal V}_{t}^{\gamma^{-k,o}}(i_t^k)=V_t^{\gamma^{-k,o},sep}(\xi_t^k,\theta_t^{\gamma^{(K)}},\lambda_t^k  ) =V_t^{\gamma^{-k,o},is}(\xi_t^k,\theta_t^{\gamma^{(K)}})
  $, i.e.,  
  depend on $\lambda_t^k$ through $\xi_t^k,    \forall t$.

\ \

\begin{theorem}(DP Equations with Private and Centralized  Information States Depended on Strategies)\\
\label{thm:vp-mn-is}
Suppose   Assumptions~\ref{ass-1} hold. 
Restrict  the strategies to    $ {\cal U}_{1,n}^{k,sep}\subseteq {\cal U}_{1,n}^{k}$,   $\gamma_t^k=\gamma_t^{k}(\Xi_t^k[I_t^k], \Theta_t^{\gamma^{(K)}}[\Delta_{t}^{(K)}], \Lambda_t^k), \forall t, k$. 
 The following hold.  
  
  (1)  For each $ k \in {\mathbb Z}_+^K$ suppose a PbP optimal $\gamma^{k,o} \in {\cal U}_{1, n}^{k,sep}$ exists. The 
    identity holds, $ \forall (i_t^k, \xi_t^k,\delta^{(K)}, \theta_t^{\gamma^{(K)}},\lambda_t^k )$, 
\begin{align*}   
  &  {\cal V}_t^{\gamma^{-k,o}}(i_t^k )=  {V}_t^{\gamma^{-k,o},sep}(\xi_t^k,\theta_t^{\gamma^{(K)}},\lambda_t^k  ) =V_t^{\gamma^{-k,o},is}(\xi_t^k,\theta_t^{\gamma^{(K)}})
\end{align*}    
where  
  $V_t^{\gamma^{-k,o},is}(\cdot)$ 
 satisfies the DP equations, 
\begin{align}
&{ V}_n^{\gamma^{-k,o},is}(\xi_n^k,\theta_n^{\gamma^{(K)}})= \inf_{u_n^k \in {\mathbb U}_n^k}\int_{{\mathbb X}_n  \times {\mathbb L}_{n}^{-k} } \ell(n, x_n, \nonumber   \\
& u_n^k,   \gamma_n^{-k,o}(\xi_n^{-k},\theta_n^{\gamma^{(K)}}))  \xi_{n}^{k}[i_n^k] (dx_{n}, d\lambda_n^{-k}),      \label{dp_1_nnn_1-mn-is} \\
&{ V}_t^{\gamma^{-k,o},is}(\xi_t^k,\theta_t^{\gamma^{(K)}} )  \label{dp_2_mn-is}   \\
 &=  \inf_{u_t^k \in {\mathbb U}_t^k}  \Big\{  \int_{{\mathbb X}_t \times {\mathbb L}_{t}^{-k}}    \ell(t, x_t,u_t^k,  \gamma_t^{-k,o}(\xi_t^{-k},\theta_t^{\gamma^{(K)}}))  \nonumber \\
 &.  \xi_{t}^k[i_t^k]  (dx_{t}, d\lambda_{t}^{-k}) +  \int_{ {\mathbb Y}_{t+1}^k \times {\mathbb X}_{t, t+1}   \times {\mathbb L}_{t}^{-k}    } \Big[{V}_{t+1}^{\gamma^{-k,o},is}({\bf T}_{t+1}^{k}\big( \nonumber  \\
 &y_{t+1}^{k}, u_t^k,\gamma_t^{-k,o}(\xi_t^{-k}, \theta_t^{\gamma^{(K)}}),    \xi_t^k(\cdot)\big), {\bf T}_{t+1}^{{(K)}}\big(y_{t+1}^{(K)}, u_{t}^{k}, \nonumber  \\ 
 & \gamma_{t}^{-k,o}(\xi_t^{-k},\theta_t^{\gamma^{(K)}}), \theta_{t}^{\gamma^{(K)}} (\cdot)\big)) \nonumber  \\
 &. {\bf P}_{t+1}^{ \gamma^{-k,o}}(dy_{t+1}^{(K)}, d\lambda_{t}^{-k}, dx_{t}, dx_{t+1}\big|   \xi_t^k, \theta_t^{\gamma^{(K)}}, u_t^k) \Big]  \Big\} \label{dp_1_na_m-is}
\end{align}
where  
 the   conditional  PM of the last RHS term is specified.
Moreover,  if  the infimum in the DP equations exists, then the optimal strategy occurs in the set of  strategies, $\gamma^{k,o} \in  {\cal U}_{1,n}^{k, is}$, i..e., ${\cal G}_t^k\tri \{\Xi_t^k,\Theta_t^{\gamma^{(K)}}\}$ is a sufficient statistic for $U_t^k$, $\forall t$.

(2)   The statements of Theorem~\ref{thm:vp-mn-sep}.(2), (3)  hold w.r.t.    ${ V}_t^{\gamma^{-k,o},is}(\xi_t^k,\theta_t^{(K)} ), \forall (t, k)  $. \\
\end{theorem}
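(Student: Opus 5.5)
The plan is to build on Corollary~\ref{cor:vp-mn} and Theorem~\ref{thm:vp-mn-sep}, replacing the common-data argument $\delta_t^{(K)}$ by the strategy-dependent centralized state $\theta_t^{\gamma^{(K)}}=\Theta_t^{\gamma^{(K)}}[\delta_t^{(K)}]$ of Theorem~\ref{thm:is-cen}.(2). After that, the remaining dependence on $\lambda_t^k$ is shown to enter only through $\xi_t^k$. The argument is a backward induction on $t=n,n-1,\ldots,1$, with the fixed responses $\gamma^{-k,o}$ taken in ${\cal U}_{1,n}^{-k,sep}$ and hence of the form $\gamma_t^{j,o}(\xi_t^j,\theta_t^{\gamma^{(K)}},\lambda_t^j)$.

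\textbf{Terminal step $t=n$.} Start from (\ref{dp_1_nnn_1-mn}). The integrand depends on $(x_n,\lambda_n^{-k})$, on $u_n^k$, and on the other strategies evaluated at their own data. Under the separated restriction these data are $\xi_n^j$, $\theta_n^{\gamma^{(K)}}$ and $\lambda_n^j$. Each $\xi_n^j$ is a measurable function of $(\delta_n^{(K)},\lambda_n^j)$, and $\lambda_n^{-k}$ is integrated against $\xi_n^k$. The integral is therefore a function of $(\xi_n^k,\theta_n^{\gamma^{(K)}},u_n^k)$ only, and $\lambda_n^k$ does not appear outside $\xi_n^k$. Taking the infimum over $u_n^k$ gives $V_n^{\gamma^{-k,o},is}(\xi_n^k,\theta_n^{\gamma^{(K)}})$, which yields both identities at $t=n$. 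The terminal minimizer is then a function of $(\xi_n^k,\theta_n^{\gamma^{(K)}})$, which is the claimed sufficient statistic.

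\textbf{Inductive step.} Assume $V_{t+1}=V_{t+1}^{\gamma^{-k,o},is}(\xi_{t+1}^k,\theta_{t+1}^{\gamma^{(K)}})$. Substitute this into (\ref{dp_1_na_m}). By Theorem~\ref{thm:is-pbp}, $\xi_{t+1}^k={\bf T}_{t+1}^k(y_{t+1}^k,u_t^k,\gamma_t^{-k,o}(\cdot),\xi_t^k)$. By Theorem~\ref{thm:is-cen}.(2), $\theta_{t+1}={\bf T}_{t+1}^{(K)}(y_{t+1}^{(K)},\gamma_t^{(K)},\theta_t)$. The transition kernel (\ref{Mar_REC_1-nn}) is built from $Q$, $S$, the Dirac measures of $\gamma_t^{-k,o}$ and $\xi_t^k$, and it is augmented with the $Q^{-k}_{t+1}$ factors so that $y_{t+1}^{-k}$ is available. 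By Lemma~\ref{lem:markov}.(2) this is a conditional law given $(\xi_t^k,\theta_t^{\gamma^{(K)}},u_t^k)$. The right-hand side of (\ref{dp_2_mn}) is therefore a measurable function of $(\xi_t^k,\theta_t^{\gamma^{(K)}},u_t^k)$. Lemma~\ref{lem:ind} guarantees that $\gamma^k$ enters only through its action, so the infimum over $u_t^k$ is well posed and yields (\ref{dp_2_mn-is}). The equalities ${\cal V}_t=V_t^{sep}=V_t^{is}$ then follow from Corollary~\ref{cor:vp-mn} and Theorem~\ref{thm:vp-mn-sep}.(1). If the infimum is attained, a measurable selection makes $u_t^{k,o}$ a function of $(\xi_t^k,\theta_t^{\gamma^{(K)}})$, so $\gamma^{k,o}\in{\cal U}_{1,n}^{k,is}$.

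\textbf{Part (2).} Repeat the verification argument of Theorem~\ref{thm:vp-mn-sep}.(2),(3) verbatim with $\pi_t$ replaced by $\theta_t$. Show by backward induction that $V_t^{is}(\Xi_t^k,\Theta_t)\le J_{t,n}^{\gamma^k,\gamma^{-k,o}}(I_t^k)$ for every admissible $\gamma^k$, using the tower property with respect to $I_t^k\subseteq I_{t+1}^k$ from Lemma~\ref{lemma:nested}. Equality holds for the minimizing selection.

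\textbf{Main obstacle.} The hard step is that $\Theta_t^{\gamma^{(K)}}$ depends on the strategy $\gamma^k$ itself, through $\gamma_t^k(\delta_t,\cdot)$ in (\ref{ope_is-1}). Optimizing pointwise over $u_t^k$ while $\theta_{t+1}$ depends on the whole map $\lambda_t^k\mapsto\gamma_t^k$ threatens Property \#1. My first attempt would be to treat the $k$-th component of the $\theta$-update as determined by the fixed-point (PbP) strategy profile. In other words, $\gamma^{(K)}$ in $\Theta$ is frozen at $\gamma^{(K),o}$, as in a Nash-type argument, and only the realized action $u_t^k$ is optimized. The plan is then to check that the resulting $\theta_{t+1}$ is consistent at the optimum, which is where the PbP fixed-point hypothesis in part (1) is used. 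If this fails, the fallback is to invoke Theorem~\ref{thm:vp-mn-sep} with the strategy-free $\Pi_t^{(K)}$ and identify $\Theta_t$ with a functional of $(\Pi_t,\gamma^{(K),o})$.
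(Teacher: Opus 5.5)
Your skeleton matches the paper's proof, which for part (1) says only ``by induction'' and defers part (2) to Theorem~\ref{thm:vp-mn-sep}. The gap in your induction is where you conclude that the right-hand side is a function of $(\xi_t^k,\theta_t^{\gamma^{(K)}},u_t^k)$ alone. For $T\ge 2$ this fails in two ways.

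\emph{Strategy dependence of $\Theta$.} $\Lambda_{t+1}^{(K)}$ contains $U_t^{(K)}$, so updating $\Theta$ needs the law of $U_t^k$ given $\Delta_t^{(K)}$. That law depends on the whole map $\lambda_t^k\mapsto\gamma_t^k(\delta_t^{(K)},\lambda_t^k)$. Lemma~\ref{lem:ind} concerns only $\Xi_t^k$ and says nothing about $\Theta$. If $\Theta$ is left unfrozen, the conditional cost at $i_t^k$ depends on the values of $\gamma_t^k$ at other $\lambda_t^k$, so the infimum does not reduce to a pointwise minimization over $u_t^k$. Freezing $\Theta$ at $\gamma^{(K),o}$, as you propose, is the right reading, since it keeps $\gamma^{-k,o}$ fixed maps of $I_t^{-k}$ when $\gamma^k$ varies. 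But then $\theta_t$ is just a fixed function of $\delta_t^{(K)}$, and ``checking consistency at the optimum'' is not a proof step.

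\emph{$\lambda_t^k$ does not drop out.} Even with $\Theta$ frozen, $\Delta_{t+1}^{(K)}$ contains $Y_{t-T+1}^k$ and $U_{t-T+1}^k$, which are coordinates of $\Lambda_t^k$. $\gamma^{k,o}$ may use them to generate the actions that appear in $\Lambda_{t+1}^k$. Hence $\theta_{t+1}$, and through it the other agents' future actions, depends on them directly. But $\xi_t^k$ is a measure on ${\mathbb X}_t\times{\mathbb L}_t^{-k}$ and need not determine them. Writing $\theta_{t+1}={\bf T}_{t+1}^{(K)}(y_{t+1}^{(K)},\gamma_t^{(K)},\theta_t)$ hides this, because conditioning on $y_{t+1}^{(K)}$ is not conditioning on $\Delta_{t+1}^{(K)}$. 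Your terminal step is correct, since $\lambda_n^k$ enters there only through $\xi_n^k$, but this does not propagate backwards.

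This cannot be repaired, as a concrete example shows.
\begin{itemize}
\item Take $K=2$, $T=2$, $n=3$, and a static $X$ uniform on $\{0,1\}$.
\item Let $Y_1^1=W$ be uniform on $\{0,1\}$ and independent of $X$, let $Y_2^1=X$, and make all other observations trivial.
\item The only cost is $\ell(3,x,u_3^{(2)})=(u_3^2-x)^2$ with $u_3^2\in\{0,1\}$.
\end{itemize}
For every profile, $I_3^2$ is a function of $W$ alone, so $J_3\equiv 1/2$ and every profile is PbP optimal. Take the separated profile $\gamma_1^{1,o}=y_1^1$, $\gamma_2^{1,o}=u_1^1$, with $\gamma_3^{2,o}(\theta_3)$ equal to the a.s.\ value of the $U_2^1$-coordinate under $\theta_3$ (computed under $\gamma^{(K),o}$), and all other actions constant. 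Then $U_3^2=W$ whatever agent 1 does afterwards, so ${\cal V}_2^{\gamma^{-1,o}}(i_2^1)=(w-x)^2$. However, $\Delta_2^{(K)}=\{\emptyset\}$ makes $\theta_2$ the prior, and $\xi_2^1=\delta_x\otimes(\mbox{point mass})$. So $i_2^1=(0,0,0)$ and $i_2^1=(1,0,1)$ share the same $(\xi_2^1,\theta_2)$ but have values $0$ and $1$, and no $V_2^{is}(\xi_2^1,\theta_2)$ exists.

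What your induction does yield, with $\Theta$ frozen, is the separated identity ${\cal V}_t^{\gamma^{-k,o}}(i_t^k)=V_t^{\gamma^{-k,o},sep}(\xi_t^k,\theta_t,\lambda_t^k)$. Even that needs a fact you omitted: each $\xi_t^j$ is obtained by disintegrating $\theta_t$ along the $\Lambda_t^j$-coordinate, so it is a functional of $(\theta_t,\lambda_t^j)$. Knowing only that it is a function of $(\delta_t^{(K)},\lambda_t^j)$, as you state, leaves $\delta_n^{(K)}$ rather than $\theta_n$ in your terminal integral.

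Both difficulties arise only for $T\ge 2$. For $T=1$, $\Lambda_{t+1}^{(K)}$ contains no actions, and $Y_t^k$ enters $\theta_{t+1}$ only through its likelihood, which $\xi_t^k$ records. The paper's one-line proof addresses neither difficulty.
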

\begin{proof} 
 (1)    This is  shown   using  induction. 
  (2) This follows from the statements of Theorem~\ref{thm:vp-mn-sep}. 
\end{proof}

\ \

\section{conclusion}
The main contribution of this paper is  the extension of the classical DP approach of POMDP and its properties  to  decentralized  Markovian networks,   with multiple agents each    assigned   information patterns with    private   and   delayed shared components. Necessary and sufficient conditions are derived using different generalized DP equations and information states, based on  the concept of  Person-by-Person  (PbP)  optimality of team theory.  A separation principle is shown based private and centralized    information states, which   are   sufficient statistic for  the agent's  strategies.

\bibliographystyle{IEEEtran}

\bibliography{Bibliography_Decentralized_fixed}

\end{document}